\newtheorem*{theorem*}{Theorem}
\newtheorem{lemma}{Lemma}
\newtheorem{prop}{Proposition}
\newcommand{\bd}{\boldsymbol}
\newcommand{\defi}{\stackrel{\text{def}}{=}}	
\newcommand{\first}[2]{  \ensuremath{ \frac{\partial   #1}{\partial #2  } }  }
\newcommand{\redud}[1]{\textcolor{red}{ \underline{\textbf{#1}} }}
\DeclareMathAccent{\wtilde}{\mathord}{largesymbols}{"65}   
\newcommand{\ut}[1]{\underaccent{\wtilde}{#1}}             
\def \spacingset#1{ \renewcommand{\baselinestretch}{#1}\small\normalsize } \spacingset{1}
\begin{document}


\title{``Stochastic Inverse Problems'' and Changes-of-Variables
   \thanks{This work was supported in part by the NNSA Office of Defense Nuclear Nonproliferation Research and Development, NA-22 F2019 Nuclear Forensics Venture (LA19-V-FY2019-NDD3Ac).}
}

\author{
Peter W. Marcy\thanks{Los Alamos National Laboratory, Los Alamos, NM ({pmarcy@lanl.gov})}
\and  
Rebecca E. Morrison\thanks{University of Colorado Boulder, Boulder, CO ({rebeccam@colorado.edu})}
} 

\date{}

\maketitle

\begin{abstract}
\noindent
Over the last decade, a series of applied mathematics papers have explored a type of inverse problem---called by a variety of names including ``inverse sensitivity'', ``pushforward based inference'', ``consistent Bayesian inference'', or ``data-consistent inversion''---wherein a solution is a probability density whose pushforward takes a given form.
The formulation of such a \emph{stochastic} inverse problem can be unexpected or confusing to those familiar with traditional Bayesian or otherwise statistical inference.  
To date, two classes of solutions have been proposed, and these have only been justified through applications of measure theory and its disintegration theorem.
In this work we show that, under mild assumptions, the formulation of and solution to \emph{all} stochastic inverse problems can be more clearly understood using basic probability theory: 
a stochastic inverse problem is simply a change-of-variables or approximation thereof.
For the two existing classes of solutions, we derive the relationship to change(s)-of-variables and illustrate using analytic examples where none had previously existed.
Our derivations use neither Bayes' theorem nor the disintegration theorem explicitly.
Our final contribution is a careful comparison of changes-of-variables to more traditional statistical inference.
While taking stochastic inverse problems at face value for the majority of the paper, our final comparative discussion gives a critique of the framework.  \\

\noindent
Keywords:  Jacobian, reparameterization, uncertainty quantification, statistical inference, Bayesian analysis
\end{abstract}


\section{Introduction}

Scientific models often relate unknown parameters or functions to observable quantities.
Any choice of unknown input to the model will produce an observable that can be compared with real measurements.
In the absence of noise or measurement error, this \emph{forward} problem can be thought of as a well-defined operator from parameter/function-space to observable-space.
To solve the \emph{inverse} problem (IP) is to estimate the unknown inputs from a finite collection of observations.
In the more formal language of \cite{Tikhonov1995}, if $\mathcal F$ is an operator between metric (or Banach, or Hilbert) spaces representing parameters and data, $\mathcal F: \mathcal P \mapsto \mathcal Q$,
the solution to the IP is defined by the operator equation
\begin{align}
  \bd \theta
  &\text{\quad such that \quad}
  \mathcal F(\bd\theta) = \bd y
  \quad\quad  \big( \bd\theta \in \mathcal P, \  \bd y \in \mathcal Q \big)  \ .   \label{eq:InvProbClassic}
\end{align}
This operator formulation is typically associated with function spaces, where solving the IP means solving equations (often integral equations) given a finite number of potentially noisy observations \citep{Groetsch1993, Stuart2010, Kirsch2011, Aster2019, Lesnic2021}.  
Equality within (\ref{eq:InvProbClassic}) may not be possible, and a solution might instead be required to minimize some functional involving $\mathcal F(\bd\theta)$ and $\bd y$, as in the case of weighted or regularized least-squares estimation.

When $\mathcal P$ and $\mathcal Q$ are subsets of $\mathbb R^p$ and $\mathbb R^q$ (respectively), the forward operator is a Euclidean map $\bd g$, and the noise is typically written into the IP explicitly.
The solution is
\begin{align}
  \bd \theta
  &\text{\quad such that \quad}
  \bd g(\bd\theta) + \bd\epsilon = \bd y 
  \quad\quad  \big( \bd\theta \in \mathcal P, \  \bd y \in \mathcal Q, \  \bd\epsilon \sim F_{\bd E} \big)  \ .   \label{eq:InvProbClassicNoise}
\end{align}
The term $\bd\epsilon$ is a realization of a random variable $\bd E$ corresponding to the error process with cumulative distribution function $F_{\bd E}$.
(The use of the cumulative function allows for continuous as well as discrete and mixed random variables.
Also, the statement within (\ref{eq:InvProbClassicNoise}) assumes additive error, but is trivially modified for multiplicative or other error structures.)
Together, $\bd g(\bd\theta)$ and $F_{\bd E}(\bd\epsilon)$ determine a model whose likelihood function forms the basis for \emph{statistical} IPs \citep{Pawitan2001, Davison2003, Kaipio2005, Tarantola2005, Stuart2010, Reid2010, Chiachio2022}.
As stated in (\ref{eq:InvProbClassicNoise}), the solution to the statistical IP is a point estimate derived in some fashion from the likelihood.
In certain cases, the maximum likelihood estimate will be a weighted or regularized least-squares estimate.
In the Bayesian \citep{Bernardo1994, Robert2007, Gelman2014} and Fiducial \citep{Hannig2009, Hannig2016} inferential contexts, the solution to a statistical IP is an entire probability distribution from which point estimates may be obtained.

Over the past century, the field of inverse problems has extended into nearly every domain of science, engineering, and technology, cementing its status as fundamental to applied mathematics and statistics.
One particular, more recently developed class of so-called \emph{stochastic} inverse problems (SIPs) identifies a solution as a probability density whose pushforward takes a given form 
\citep{Butler2014, Butler2015, ButlerHuhtala2015, Mattis2015, Butler2018a, Mattis2019, Uy2019, Butler2020a, Bruder2020, Butler2020d, Tran2021}.
These SIPs and their solutions have also gone under the names of
``(stochastic) inverse sensitivity problems'' \citep{Breidt2011, Butler2012, Butler2014, Butler2013, Graham2017};
``measure-theoretic inverse problems'' \citep{Butler2017, Presho2017}; 
``consistent Bayesian'' or ``pushforward-based inference'' \citep{Butler2018a, Butler2018b, Walsh2018};
``data/observation-consistent inversion'' \citep{Butler2018b, Butler2020b, Butler2020c, Mattis2022}; or
``random parameter models'' \citep{Swigon2019}.

The formulation and solution of these SIPs can look peculiar to those familiar with more traditional inverse problems. 
For example, each observable quantity is an entire probability density, not of a fixed realization of a random variable modelled conditionally.
Also, there must be at least as many unknown parameters as observables.
Moreover, derivations of the two different classes of solutions that have been proposed---those of \cite{Breidt2011}  and \cite{Butler2018a} which we call ``BBE'' (short for ``Breidt, Butler, Estep'') and ``BJW'' (short for ``Butler, Jakeman, Wildey''), respectively---rely heavily upon measure theory, specifically the disintegration theorem.

The goal of this paper is to explore the formulation and solution to SIPs using only introductory probability and mathematical statistics while also giving a careful examination of the existing literature.
While measure theory is often the language of choice, we feel that in this context it can obscure concepts that are quite uncomplicated.
To begin to explore some of the SIP peculiarities, we first address the case when $p = q$.
For $p \geq q$, we then offer a class of ``intuitive'' solutions and show how these relate to a change-of-variables (CoV) from observable- to parameter-space (Section~\ref{sec:Int}).
We give a simple algorithm to obtain samples from intuitive solutions, and the underlying reasoning proves useful for later results.
The BBE and BJW solutions are investigated through theoretical derivation and analytic examples (Sections~\ref{sec:BBE},~\ref{sec:BJW}).
Analytic results have heretofore never been given.
We show that under mild assumptions, the BBE and BJW solutions can be derived from CoVs.
Some discussion of related work is provided in Section \ref{sec:OtherWork}.
We then show that \emph{any} solution to an SIP must be related to a CoV (Section~\ref{sec:ALLisCov}).
For $p>q$, the results rely on \emph{auxiliary variables} to augment the underdetermined system.
A comparative critique of SIPs/CoVs versus inference is given in Section~\ref{sec:Inference} along with two illustrative examples.
Our main conclusion is that SIPs are significantly different from statistical inverse problems and should be handled with greater care.

In the remainder of this section, we give a formal definition of the SIP 
and a statement of the traditional CoV theorem. 

\subsection{Stochastic Inverse Problem Formulation and Assumptions}   \label{sec:SIP}

In an SIP, a \\ $q$-dimensional vector of observable quantities or data is taken to be a random variable $\bd Y$ with given probability density function (pdf) $f_{\bd Y}(\bd y)$. 
Further, there is also a function or ``forward-map'' from parameter- to data-space
\begin{align*} 
  \bd g : \big( \mathcal P \subseteq \mathbb R^p \big)  \ &\longmapsto \  \big( \mathcal Q \subseteq \mathbb R^q \big)  \\
  \left[  \begin{array}{l}
  \Theta_1  \\
  \vdots  \\ 
  \vdots  \\ 
  \Theta_p
  \end{array}  \right] 
  &\longmapsto
  \left[  \begin{array}{lcl}
  Y_1  &  \hspace*{-6pt} = \hspace*{-6pt}  &  g_1(\bd\Theta)  \\
  \vdots  & &  \\ 
  Y_q  &  \hspace*{-6pt} = \hspace*{-6pt}  &  g_q(\bd\Theta)  
  \end{array}  \right] , 
\end{align*}
with $p \geq q$,
which is either known analytically or which can be evaluated as a ``black-box'', such as a computer model that solves a set of differential equations.
The goal is to obtain a density for random variables in the pre-image, $\bd\Theta \in \mathcal P$, that will transform (exactly or in some approximate sense) to $f_{\bd Y}(\bd y)$ under the forward-map $\bd g$.
In other words, given $f_{\bd Y}$ and $\bd g$, the solution is 
\begin{align}
  f_{\bd\Theta}(\bd \theta)
  &\text{\quad such that \quad}
  \bd \Gamma  \defi  \bd g(\bd\Theta) \sim f_{\bd Y}   \label{eq:InvProb}
\end{align}
i.e., a density that pushes forward or propagates ``correctly''. 
\emph{Throughout this paper we assume that a solution exists}, though in applications one may need to be careful and check that the range $\bd g(\mathcal P)$ contains $\mathcal Q$, the support of the given $f_{\bd Y}(\bd y)$.

Within the classical IP definition (\ref{eq:InvProbClassic}), taking $\mathcal P$ and $\mathcal Q$ to be subsets of Euclidean space, and replacing $\bd y$ and $\bd\theta$ with continuous random variables $\bd Y$ and $\bd\Theta$, then an SIP defined by (\ref{eq:InvProb}) appears to be a natural variant of the more traditional (\ref{eq:InvProbClassic}).
The simplest example is when the operator $\mathcal F$ is a linear map $\bd g$ between Euclidean spaces and represented by the invertible matrix $\bd A$. 
The solution to the classical linear IP is of course $\bd\theta = \bd A^{-1} \bd y$; the solution to the linear SIP is the density of the random variable $\bd A^{-1} \bd Y$. 

In practice, a random sample from $f_{\bd\Theta}(\bd \theta)$ constitutes a solution as well, albeit an \emph{approximate} solution whose quality increases with sample size.  
Let us explicitly state this as a non-controversial assumption.

\noindent 
\emph{
\textbf{\underline{Assumption (A0):}}  
The SIP defined by (\ref{eq:InvProb}) is approximately solved when a random sample 
$\bd\theta^{(1)}, \ldots, \bd\theta^{(M)}$ is obtained such that 
$\bd g \big(\bd\theta^{(1)}\big), \ldots \bd g\big(\bd\theta^{(M)}\big) \stackrel{\text{iid}}{\sim} f_{\bd Y}$.
}

The main assumption that we use in this paper is the following.

\noindent 
\emph{
\textbf{\underline{Assumption (A1):}}  The maps $g_1, \ldots, g_q$ are in $C^1(\mathcal P)$ (i.e. they are continuously differentiable on the domain) and the Jacobian $\big| \first{\bd g}{\bd\theta} \big|$ has full row-rank except for on a set $\mathcal P_0$ of measure zero.  
Without loss of generality, the left $q \times q$ block of the Jacobian $\big| \first{\bd g}{\bd\theta_{1:q}} \big|$ is invertible on all of $\mathcal P \setminus \mathcal P_0$.
}

First of all, it is not unreasonable to assume the Jacobian has linearly independent rows almost everywhere (``a.e.'', meaning for all but a set of measure zero) in $\mathcal P$.
If the rows were dependent a.e., then one could consider only the independent outputs and instead solve the corresponding reduced SIP.
Second, (A1) allows the domain to be partitioned into subdomains $\mathcal P_0$ where the Jacobian determinant vanishes and disjoint sets $\mathcal P_1, \ldots, \mathcal P_m$ where the functions $g_1, \ldots, g_q$ produce full row-rank Jacobian.

In the first paragraph of this section it was taken as given that: 

\noindent 
\emph{
\textbf{\underline{Assumption (A2):}}  The random variables 
$\bd Y$ and $\bd\Theta$ are absolutely continuous and thus admit probability densities with respect to Lebesgue measure.
}

This assumption is not strictly necessary but is, nonetheless, the usual context for the SIPs.
The results of this paper carry over to discrete random variables as well, but we shall from here on only speak in terms of probability densities.

\subsection{Changes-of-Variables}\label{sec:CoV}

For completeness we state the change-of-variables (CoV) theorem,
which can be found in standard textbooks on probability and mathematical statistics. The following is taken from \cite{Shao2003}, pg. 23 and is also similar to \cite{Casella2002}, pg. 185.
\begin{theorem*}[Change-of-Variables] 
Let $\bd U$ be a random $p$-vector with a Lebesgue pdf $f_{\bd U}(\bd u)$ and let $\bd V \defi \bd t(\bd U)$, where $\bd t$ is a Borel function from $(\mathbb R^p, \mathcal B^p)$ to $(\mathbb R^p, \mathcal B^p)$. 
Let $A_1, \ldots, A_d$ be disjoint sets in $\mathcal B^p$ such that $\mathbb R^p - (A_1 \cup \cdots \cup A_d)$ has Lebesgue measure 0 and $\bd t$ on $A_i$ is one-to-one with a nonvanishing Jacobian, i.e., $| \first{\bd t}{\bd u} | \neq 0$ on $A_i$ for $i=1, \ldots, d$.  
Then $\bd V$ has the following Lebesgue pdf
\begin{align}
   f_{\bd V}(\bd v)  &=  \sum_{i=1}^d  f_{\bd U}\big( \bd u = \bd t^{-1}_i(\bd v) \big)  \Big| \first{\bd t^{-1}_i}{\bd v} \Big|  \ ,   \label{eq:CoV}
\end{align}
where $\bd t^{-1}_i$ is the inverse function of $\bd t$ on $A_i$ (i=1, \ldots, d).
\end{theorem*}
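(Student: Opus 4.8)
The plan is to work at the level of the probability assigned to an arbitrary Borel set and then appeal to uniqueness of the Lebesgue density. Fix $B \in \mathcal B^p$. Since $\bd V = \bd t(\bd U)$, the events $\{\bd V \in B\}$ and $\{\bd U \in \bd t^{-1}(B)\}$ coincide, so
\begin{align*}
  P(\bd V \in B)
  &= P\big( \bd U \in \bd t^{-1}(B) \big)
  = \int_{\bd t^{-1}(B)} f_{\bd U}(\bd u) \, d\bd u .
\end{align*}
Because $\mathbb R^p - (A_1 \cup \cdots \cup A_d)$ is Lebesgue-null it contributes nothing to the integral, and as the $A_i$ are disjoint I may split the domain of integration to obtain $P(\bd V \in B) = \sum_{i=1}^d \int_{\bd t^{-1}(B) \cap A_i} f_{\bd U}(\bd u) \, d\bd u$.

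Next I would transform each summand by a deterministic (calculus) change-of-variables. On $A_i$ the map $\bd t$ is one-to-one with $\big| \first{\bd t}{\bd u} \big| \neq 0$, so by the inverse function theorem its restriction is a $C^1$ diffeomorphism onto $\bd t(A_i)$ with inverse $\bd t^{-1}_i$. Substituting $\bd u = \bd t^{-1}_i(\bd v)$, which carries the region $\bd t^{-1}(B) \cap A_i$ in $\bd u$-space bijectively onto $B \cap \bd t(A_i)$ in $\bd v$-space, the standard Jacobian formula for Lebesgue integrals gives
\begin{align*}
  \int_{\bd t^{-1}(B) \cap A_i} f_{\bd U}(\bd u) \, d\bd u
  &= \int_{B \cap \bd t(A_i)} f_{\bd U}\big( \bd t^{-1}_i(\bd v) \big) \Big| \first{\bd t^{-1}_i}{\bd v} \Big| \, d\bd v .
\end{align*}

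Reassembling the sum and encoding each range restriction with an indicator $\mathbbm{1}_{\bd t(A_i)}(\bd v)$ collects everything under a single integral over $B$,
\begin{align*}
  P(\bd V \in B)
  &= \int_B \sum_{i=1}^d f_{\bd U}\big( \bd t^{-1}_i(\bd v) \big) \Big| \first{\bd t^{-1}_i}{\bd v} \Big| \mathbbm{1}_{\bd t(A_i)}(\bd v) \, d\bd v .
\end{align*}
Since this holds for every $B \in \mathcal B^p$, the integrand must equal the Lebesgue density of $\bd V$ almost everywhere by essential uniqueness of the Radon--Nikodym derivative, which is exactly \eqref{eq:CoV} under the convention that the $i$-th term is dropped wherever $\bd v \notin \bd t(A_i)$.

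The step I expect to be the crux is the term-by-term substitution. It leans on the classical change-of-variables theorem for integrals, with the inverse function theorem supplying both the existence and the $C^1$ regularity of $\bd t^{-1}_i$ from the one-to-one and nonvanishing-Jacobian hypotheses; one must also check that the substitution maps $\bd t^{-1}(B) \cap A_i$ onto $B \cap \bd t(A_i)$, which is where injectivity of $\bd t$ on $A_i$ is used. The remaining care is bookkeeping of the images $\bd t(A_i)$, which need not be disjoint in $\bd v$-space and are precisely what force a sum over several inverse branches rather than a single term.
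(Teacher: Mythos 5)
This statement is not proved in the paper at all: it is quoted verbatim as a known textbook result (from Shao 2003, p.~23, with a pointer to Casella and Berger), so there is no in-paper argument to compare against. Your proof is correct and is precisely the standard textbook argument those references give: reduce $P(\bd V \in B)$ to an integral of $f_{\bd U}$ over $\bd t^{-1}(B)$, discard the Lebesgue-null complement of $A_1 \cup \cdots \cup A_d$, apply the deterministic Jacobian change-of-variables on each $A_i$ where $\bd t$ restricts to a diffeomorphism onto $\bd t(A_i)$, and conclude by a.e.\ uniqueness of the Radon--Nikodym derivative, with the indicators $\mathbbm{1}_{\bd t(A_i)}(\bd v)$ correctly handling the fact that the images $\bd t(A_i)$ may overlap.
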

Being that one of the goals of the paper is to keep exposition at the level of basic probability and away from measure theory, we note that the theorem above will be the only place in which Borel sigma algebras are mentioned, and the last time we need to refer to Lebesgue measure.

\section{CoV Solutions to the SIP When $p = q$: Immediate Issues and Insights}   \label{sec:CoV_SIP}   

Taking $\bd t = \bd g$ and $\bd U = \bd\Theta$ in the CoV Theorem, the pushforward of $f_{\bd\Theta}$ through $\bd g$ has density 
$f_{\bd\Gamma}(\bd\gamma)  =  \sum_{i=1}^d  f_{\bd\Theta}\big( \bd g_i^{-1}(\bd\gamma) \big)  \big| \partial\bd g_i^{-1} / \partial\bd\gamma \big|$.
If $\bd g$ is a one-to-one function of $\mathcal P$ onto $\mathcal Q$, then $d = 1$, and no $i$ subscript is necessary.
In this case one can go the other direction and pull $f_{\bd Y}$ back through $\bd h \defi \bd g^{-1}$ to get 
$f_{\bd H}(\bd\eta)  =  f_{\bd Y}\big( \bd g(\bd\eta) \big)  \big| \partial\bd g / \partial\bd\eta \big|$.
Thus, by taking $\bd\Theta \defi \bd H$, the SIP is solved uniquely (a.e.) by the density
\begin{align}
   f^{\text{CoV}}_{\bd\Theta}(\bd\theta)  &=  f_{\bd Y}\big( \bd g(\bd\theta) \big)   \Big| \first{\bd g}{\bd\theta} \Big|  \ .  \label{eq:CoVsolution}
\end{align}

If $\bd g$ is not a uniquely invertible function, its many-to-one nature allows for multiple solutions to the SIP.
We show in the next proposition that there are in fact \emph{infinitely} many solutions.
This property does not appear to have been observed within the existing literature.
For simplicity and ease of exposition we assume that $\bd g$ is exactly $m$-to-1 onto all of $\mathcal Q$.
The general case can be proved similarly, but involves keeping track of all the distinct sets where the function is two-to-one, three-to-one, etc{.}~within the $d$-partition of the domain; as such the indexing is more laborious and no further insight is added. 
\begin{prop} 
Assume (A1) and let $\bd g$ be an $m$-to-1 ($m > 1$) function of $\mathcal P$ onto $\mathcal Q$.  
Then there exists a continuously-indexed, infinite family of solutions to the SIP.
\label{prop:InfiniteSolutions}
\end{prop}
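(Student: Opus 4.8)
The plan is to exploit the freedom in how the probability mass of $f_{\bd Y}$ can be distributed among the $m$ preimage branches of $\bd g$. By (A1) together with the $m$-to-1-onto hypothesis, away from the measure-zero set $\mathcal P_0$ the domain splits into $m$ disjoint pieces $A_1, \ldots, A_m$ on each of which $\bd g$ is a $C^1$ bijection onto $\mathcal Q$ with nonvanishing Jacobian; write $\bd g_i^{-1} : \mathcal Q \to A_i$ for the corresponding inverse branch, exactly as in the pushforward formula stated at the beginning of this section. The single-branch pullback (\ref{eq:CoVsolution}) sends all of the mass of $f_{\bd Y}$ down one branch; the idea instead is to split that mass across the $m$ branches in arbitrary proportions.

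Concretely, I would first fix any collection of measurable weight functions $w_1, \ldots, w_m : \mathcal Q \to [0,\infty)$ forming a partition of unity, $\sum_{i=1}^m w_i(\bd\gamma) = 1$ for almost every $\bd\gamma \in \mathcal Q$. Then I would define a candidate density piecewise, setting, for $\bd\theta \in A_i$,
\[
  f_{\bd\Theta}(\bd\theta) \defi w_i\big(\bd g(\bd\theta)\big)\, f_{\bd Y}\big(\bd g(\bd\theta)\big)\, \Big| \first{\bd g}{\bd\theta} \Big| .
\]
Nonnegativity is immediate. To check that this pushes forward correctly, substitute into the CoV pushforward formula: on branch $i$ the factor $\big| \partial \bd g_i^{-1} / \partial \bd\gamma \big|$ cancels $\big| \partial \bd g / \partial \bd\theta \big|$ evaluated at $\bd\theta = \bd g_i^{-1}(\bd\gamma)$, leaving $w_i(\bd\gamma)\, f_{\bd Y}(\bd\gamma)$; summing over $i$ and using $\sum_i w_i \equiv 1$ returns exactly $f_{\bd Y}(\bd\gamma)$. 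The same cancellation, run through the change-of-variables $\bd\gamma = \bd g(\bd\theta)$ on each $A_i$, gives $\int_{\mathcal P} f_{\bd\Theta} = \sum_i \int_{\mathcal Q} w_i\, f_{\bd Y} = \int_{\mathcal Q} f_{\bd Y} = 1$, so $f_{\bd\Theta}$ is a bona fide solution.

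To produce the claimed \emph{continuously-indexed, infinite} family, I would then restrict to constant weights: let $(c_1, \ldots, c_m)$ range over the interior of the probability simplex $\{ c_i > 0, \ \sum_i c_i = 1 \}$ and set $w_i \equiv c_i$. Each choice yields a solution by the construction above, and two distinct simplex points give genuinely distinct densities, since on any branch $A_i$ with $c_i \neq c_i'$ the two candidates differ by the constant factor $c_i / c_i'$ on the positive-measure subset of $A_i$ where $f_{\bd Y}\big(\bd g(\bd\theta)\big) \big| \partial \bd g / \partial \bd\theta \big| > 0$. As the interior of the simplex is a continuum of dimension $m - 1 \geq 1$, this exhibits the required family.

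The step I expect to be the crux is not the algebra---which is a one-line cancellation in the CoV formula---but justifying the clean global branch decomposition: that the $m$ preimages organize into $m$ inverse maps each defined on all of $\mathcal Q$, so that the weights $w_i$ can be assigned consistently across $\mathcal Q$. This is where the precise reading of ``$m$-to-1 onto $\mathcal Q$'' matters; I would lean on that hypothesis, and on the disjoint sets $A_i$ already supplied by the CoV theorem, to guarantee that each $A_i$ maps bijectively onto $\mathcal Q$, deferring the general (mixed two-to-one, three-to-one, and so on) case to the analogous bookkeeping the authors note can be carried out.
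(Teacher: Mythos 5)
Your proposal is correct and takes essentially the same route as the paper's proof: both distribute the mass of $f_{\bd Y}$ across the $m$ inverse branches via mixture weights summing to unity (the paper's partition sets $\mathcal P_j$ are your $A_i$, and its density $f_{\bd Y}\big(\bd g(\bd\theta)\big)\big|\first{\bd g}{\bd\theta}\big|\sum_j w_j\,\mathbbm 1\{\bd\theta\in\mathcal P_j\}$ is exactly your constant-weight construction), verified by the same Jacobian cancellation in the CoV pushforward. Your $\bd\gamma$-dependent weight functions, and the explicit normalization and distinctness checks, are mild additions the paper omits, but the continuously-indexed family you exhibit (constant weights on the simplex) is identical to the paper's.
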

\begin{proof}
Let $\mathbbm{1} \{ \bd\theta \in S \}$ denote an indicator function that is 1 when $\bd\theta$ is in the set $S$, and 0 otherwise.

Consider the density 
\begin{align}
   f^{\text{CoV}}_{\bd\Theta,\bd w}(\bd\theta)  &\defi
   f_{\bd Y}\big( \bd g(\bd\theta) \big)  \Big| \first{\bd g}{\bd\theta} \Big|  
   \sum_{j = 1}^m  w_j  \  \mathbbm 1\{ \bd\theta \in \mathcal P_j \}
\end{align}
continuously indexed by mixture weights summing to unity: $\sum_{j = 1}^m w_j = 1$.
By the CoV Theorem, the pushforward density for $\bd\Gamma \defi \bd g(\bd\Theta)$ is
\begin{align*}
   f_{\bd\Gamma}(\bd\gamma)  
   &=  \sum_{i = 1}^m  f^{\text{CoV}}_{\bd\Theta,\bd w} \big( \bd\theta = \bd g_i^{-1}(\bd\gamma) \big)  \  \Big| \first{\bd g_i^{-1}}{\bd\gamma} \Big|  \\
   &=  \sum_{i = 1}^m \left( f_{\bd Y}\big( \bd g(\bd g_i^{-1}(\bd\gamma) \big)  \Big| \first{\bd g}{\bd\theta} \Big|_{\bd\theta = \bd g_i^{-1}(\bd\gamma)}  \sum_{j = 1}^m  w_j  \  \mathbbm 1\{ \bd g_i^{-1}(\bd\gamma) \in \mathcal P_j \}  \right)  \Big| \first{\bd g_i^{-1}}{\bd\gamma} \Big|  \\
   &= f_{\bd Y}(\bd \gamma)  \sum_{i = 1}^m \left( \sum_{j = 1}^m  w_j  \  \mathbbm 1\{ \bd g_i^{-1}(\bd\gamma) \in \mathcal P_j \}  \right)  \\
   &= f_{\bd Y}(\bd \gamma)  \sum_{i = 1}^m  w_i  \  \mathbbm 1\{ \bd g_i^{-1}(\bd\gamma) \in \mathcal P_i \}  
   \quad =  f_{\bd Y}(\bd \gamma)
\end{align*}
as desired. 
In the penultimate line, all of the $i \neq j$ terms vanish tautologically.
\end{proof}
Section \ref{ssec:two-to-one} of the appendix gives a concrete demonstration of the method used in the proof above.
In the general case, any sets in the domain that get mapped to the same range subset can be given their own set of weights summing to unity.
%
In summary, when $p = q$, the existence of a neighborhood within $\mathcal P$ where $\bd g$ is not one-to-one implies an infinite number of solutions to the SIP.
The following related result states that there are no other solutions to the SIP beyond those found by CoVs.

\begin{prop} 
Suppose that $p=q$ and assumption (A1) holds for the forward map $\bd g$.
If $f_{\bd\Theta}(\bd\theta)$ solves the SIP (\ref{eq:InvProb}), then it is (a.e.) derivable from changes-of-variables.
\label{prop:CoVpq}
\end{prop}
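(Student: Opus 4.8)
The plan is to show that every SIP solution coincides almost everywhere with the canonical pullback density of (\ref{eq:CoVsolution}) after a reweighting whose weights sum to one along each fiber; this exhibits the solution as CoV-derived, with the constant-weight mixtures of Proposition~\ref{prop:InfiniteSolutions} as the special case. First I would sharpen the partition supplied by (A1). Since $p = q$, on $\mathcal P \setminus \mathcal P_0$ the Jacobian determinant $\big| \first{\bd g}{\bd\theta} \big|$ is nonzero, so $\bd g$ is locally a $C^1$ diffeomorphism; I would subdivide each $\mathcal P_k$ into disjoint measurable pieces $A_1, \ldots, A_d$ on which $\bd g$ is additionally \emph{globally} one-to-one, so that each branch inverse $\bd g_i^{-1}$ and its Jacobian $\big| \first{\bd g_i^{-1}}{\bd\gamma} \big|$ are well defined and the CoV Theorem applies verbatim. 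I then fix the canonical density $f^{\text{CoV}}_{\bd\Theta}(\bd\theta) \defi f_{\bd Y}\big( \bd g(\bd\theta) \big) \big| \first{\bd g}{\bd\theta} \big|$ on $\mathcal P \setminus \mathcal P_0$.

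Next I would define the candidate weight as the ratio $w(\bd\theta) \defi f_{\bd\Theta}(\bd\theta) / f^{\text{CoV}}_{\bd\Theta}(\bd\theta)$ wherever the denominator is strictly positive (and $w \defi 0$ otherwise), where $f_{\bd\Theta}$ is the given solution. For this ratio to recover $f_{\bd\Theta}$ almost everywhere, I must first rule out mass on the set $\mathcal B \defi \{ \bd\theta : f_{\bd Y}(\bd g(\bd\theta)) = 0 \}$ where $f^{\text{CoV}}_{\bd\Theta}$ vanishes. This is a short contradiction: if $f_{\bd\Theta}$ placed positive probability on $\mathcal B$, its pushforward would place positive probability on $\{ \bd\gamma : f_{\bd Y}(\bd\gamma) = 0 \}$, contradicting $\bd g(\bd\Theta) \sim f_{\bd Y}$. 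Hence $f_{\bd\Theta} = 0$ a.e. on $\mathcal B$, and the representation $f_{\bd\Theta}(\bd\theta) = w(\bd\theta)\, f^{\text{CoV}}_{\bd\Theta}(\bd\theta)$ holds almost everywhere on $\mathcal P \setminus \mathcal P_0$.

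The final step pushes this representation forward. Applying the CoV Theorem to $f_{\bd\Theta} = w\, f^{\text{CoV}}_{\bd\Theta}$ and using the inverse-function identity $\big| \first{\bd g}{\bd\theta} \big|_{\bd\theta = \bd g_i^{-1}(\bd\gamma)} \cdot \big| \first{\bd g_i^{-1}}{\bd\gamma} \big| = 1$, each branch contributes $f_{\bd Y}(\bd\gamma)\, w\big( \bd g_i^{-1}(\bd\gamma) \big)$, so the pushforward density equals $f_{\bd Y}(\bd\gamma) \sum_i w\big( \bd g_i^{-1}(\bd\gamma) \big)$, the sum taken over those branches whose inverse is defined at $\bd\gamma$. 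Because $f_{\bd\Theta}$ solves the SIP, this must equal $f_{\bd Y}(\bd\gamma)$, forcing $\sum_i w\big( \bd g_i^{-1}(\bd\gamma) \big) = 1$ for a.e. $\bd\gamma$ with $f_{\bd Y}(\bd\gamma) > 0$; that is, the weights sum to unity along each fiber. This displays $f_{\bd\Theta}$ in precisely the reweighted-pullback form produced by a change-of-variables, which is the claim.

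I expect the main obstacle to be measure-theoretic bookkeeping rather than any deep idea: refining (A1)'s sets into genuinely injective pieces matching the hypotheses of the CoV Theorem, and carefully excising the two null sets---$\mathcal P_0$ and the preimage $\mathcal B$ of $\{ f_{\bd Y} = 0 \}$---so that the ratio $w$ is well defined and every ``a.e.'' qualifier in the pushforward identity is legitimate. Keeping this argument within basic probability while handling the exceptional sets honestly is the delicate part.
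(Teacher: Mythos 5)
Your proof is correct, but it takes a genuinely different route from the paper's. The paper argues in the opposite direction and needs no exceptional-set analysis: it partitions $\mathcal P \setminus \mathcal P_0$ into injectivity pieces $\mathcal P_1, \ldots, \mathcal P_d$, defines on each range set $\mathcal Q_i \defi \bd g(\mathcal P_i)$ a normalized branch density
$f_{\bd Y_i}(\bd y) \propto f_{\bd\Theta}\big( \bd g_i^{-1}(\bd y) \big) \big| \first{\bd g_i^{-1}}{\bd y} \big| \, \mathbbm 1\{ \bd y \in \mathcal Q_i \}$
with normalizer $w_i = \int_{\mathcal P_i} f_{\bd\Theta}(\bd\theta)\, d\bd\theta$, pulls each $f_{\bd Y_i}$ back through $\bd g_i^{-1}$ to obtain $w_i^{-1} f_{\bd\Theta}(\bd\theta)\, \mathbbm 1\{\bd\theta \in \mathcal P_i\}$, and observes that the mixture $\sum_i w_i f^{\text{CoV}}_{\bd\Theta_i}$ reproduces the given solution; ``derivable from a CoV'' is then immediate because each component is, by construction, the CoV theorem applied to an honest probability density on $\mathcal Q_i$. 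You instead factor the solution against the single reference function $f_{\bd Y}(\bd g(\bd\theta)) \big| \first{\bd g}{\bd\theta} \big|$ (note that when $\bd g$ is not injective this is not itself a probability density---its total mass exceeds one---though your argument never needs it to be), which obligates you to two extra steps the paper avoids: the contradiction showing $f_{\bd\Theta}$ puts no mass on $\{ f_{\bd Y} \circ \bd g = 0 \}$ (your pushforward argument is right), and the forward push showing the weights sum to unity along fibers. In exchange, your route yields a sharper conclusion: it characterizes the full solution set as exactly the fiber-normalized reweightings of the canonical pullback, with the constant-weight family of Proposition~\ref{prop:InfiniteSolutions} exhibited as the special case, whereas the paper's decomposition conveys less about how the solutions are parameterized. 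Both proofs share the same finite-partition bookkeeping you flag (the stated CoV theorem uses finitely many injectivity sets, and the paper's proof simply asserts such a partition exists), so that is not a gap relative to the paper.
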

\begin{proof}
By (A1), the domain $\mathcal P \setminus \mathcal P_0$ can be partitioned into $\mathcal P_1, \ldots, \mathcal P_d$ such that $\bd g$ has $C^1(\mathcal Q)$ inverses $\bd g_1^{-1}, \ldots, \bd g_d^{-1}$ defined on these sets.
The density $f_{\bd\Theta}(\bd\theta)$ solves the SIP, so then by the CoV Theorem, its pushforward under $\bd g$ is 
$f_{\bd Y}(\bd y)  =  \sum_{i=1}^d  f_{\bd\Theta}\big( \bd g_i^{-1}(\bd y) \big)  \ \big| \first{\bd g_i^{-1}}{\bd y} \big|$.
Let the range sets be denoted $\mathcal Q_i \defi \bd g(\mathcal P_i)$ (while not a proper partition, it still holds that $\mathcal Q = \bigcup_{i=1}^d \mathcal Q_i$).
Taking
\begin{align*}
   f_{\bd Y_i}(\bd y)  &\defi  w_i^{-1}  f_{\bd\Theta}\big( \bd g_i^{-1}(\bd y) \big)  \ \Big| \first{\bd g_i^{-1}}{\bd y} \Big| \  \mathbbm 1\{ \bd y \in \mathcal Q_i \}  \\
   w_i  &=  \int_{\mathcal Q_i} f_{\bd\Theta}\big( \bd g_i^{-1}(\bd y) \big)  \ \Big| \first{\bd g_i^{-1}}{\bd y} \Big| d\bd y  
   \quad = \int_{\mathcal P_i} f_{\bd\Theta}(\bd\theta)  d\bd\theta  \ ,
\end{align*}
a CoV from each $\mathcal Q_i$ to $\mathcal P_i$ under $\bd g_i^{-1}$ yields densities $f^{\text{CoV}}_{\bd\Theta_i}(\bd\theta) \defi w_i^{-1}  f_{\bd\Theta}(\bd\theta) \mathbbm 1\{ \bd\theta \in \mathcal P_i \}$.
The weighted combination $\sum_{i=1}^d w_i \ f^{\text{CoV}}_{\bd\Theta_i}(\bd\theta)$ of the $d$ range subset CoVs is then the original given solution to the SIP, up to a set of exceptional points having measure zero.
\end{proof}

Regardless of whether $\bd g$ is invertible everywhere, if the pushforward of $f_{\bd\Theta}$ is $f_{\bd Y}$, then the original density can be written concisely as a CoV from $\mathcal Q$ to $\mathcal P \setminus \mathcal P_0$.
To see this note that by the Inverse Function Theorem, 
$\big| \first{\bd g_i^{-1}}{\bd y} \big|_{\bd y = \bd g(\bd\theta)}  =  \big| \first{\bd g}{\bd\theta} \big|^{-1}$ for all $i$ so that
\begin{align}
   f_{\bd Y}(\bd y)  &=  \sum_{i=1}^d  f_{\bd\Theta} \big( \bd g_i^{-1}(\bd y) \big)  \ \Big| \first{\bd g_i^{-1}}{\bd y} \Big|  \nonumber \\
   \Rightarrow \quad
   f_{\bd Y}(\bd g(\bd\theta))  &=  
   \Big| \first{\bd g}{\bd\theta} \Big|^{-1} \ 
   \sum_{i=1}^d  f_{\bd\Theta}\big( \bd g_i^{-1}(\bd g(\bd\theta)) \big)  \nonumber  \\
   &=  \Big| \first{\bd g}{\bd\theta} \Big|^{-1} \
   \sum_{i=1}^d  f_{\bd\Theta}(\bd\theta) \ \mathbbm 1\{ \bd\theta \in \mathcal P_i \}  \nonumber  \\
   \Rightarrow \quad
   f_{\bd\Theta}(\bd\theta)  &=  f_{\bd Y}(\bd g(\bd\theta)) \ \Big| \first{\bd g}{\bd\theta} \Big|  \ .  \label{eq:PushforwardPullback}
\end{align}
This is a general statement that a density can be written in terms of its pushforward, and will be used in Proposition \ref{prop:BJWisCoV}.

\section{Intuitive Solutions to SIPs}   \label{sec:Int}

It was demonstrated in the last section that when $p = q$, the SIP (\ref{eq:InvProb}) is solved by using at least one density given by the CoV theorem.
However, for the more general case of $p > q$, there is also a family of densities that solve the SIP (\ref{eq:InvProb}) which we call ``intuitive solutions''.

Let the parameter space be partitioned as $\bd\Theta^\top = \big( \bd\Theta_{1:q}^\top, \bd\Theta_{(q+1):p}^\top \big)$.
Temporarily fix \\ 
$\bd\Theta_{(q+1):p} = \bd\theta^*_{(q+1):p}$ and consider the distribution that comes from solving the ``square'' SIP: 
\begin{align}
   f_{\bd\Theta_{1:q} | (\bd\Theta_{(q+1):p} = \bd\theta^*_{(q+1):p})}^\text{CoV}
   &\text{\quad such that \quad}
   \bd g\big( \bd\Theta_{1:q} , \bd\theta^*_{(q+1):p} \big)  \sim  f_{\bd Y}
   \label{eq:Intuitive1}
\end{align}
via a $q$-dimensional change-of-variables.
This implies that the intuitive solution $f_{\bd\Theta}^{\text{Int}}$ given by
\begin{align}
   f_{\bd\Theta}^{\text{Int}}  &\defi
   f_{\bd\Theta_{1:q} | \bd\Theta_{(q+1):p}}^\text{CoV}  \cdot  f_{\bd\Theta_{(q+1):p}}  \label{eq:Intuitive2}
\end{align}
solves the SIP for any choice of $f_{\bd\Theta_{(q+1):p}}$.

Sampling from this density constitutes a solution by (A0), and this can be performed using a simple and intuitive (hence the name) Monte Carlo routine within the algorithm below.
\begin{algorithm}[H]
\caption{Intuitive solutions to the stochastic inverse problem (SIP)}   \label{alg:Intuitive}  
\vspace*{0.1in}
Given densities $f_{\bd Y}$ and $f_{\bd\Theta_{(q+1):p}}$,
\begin{algorithmic}
   \item[$\langle 1 \rangle$]  Sample $\bd y^* \sim f_{\bd Y}$ and $\bd\theta_{(q+1):p}^* \sim f_{\bd\Theta_{(q+1):p}}$.
   \item[$\langle 2 \rangle$]  Solve the (likely nonlinear) equation  
   \begin{align} 
     \bd y^* - \bd g\big( \bd\theta_{1:q}^*, \bd\theta_{(q+1):p}^* \big)  = \bd 0   
   \end{align}
   for $\bd\theta_{1:q}^*$ to form the random sample $\big( \bd\theta_{1:q}^{*\top}, \bd\theta_{(q+1):p}^{*\top} \big)^\top$, a realization of $\bd\Theta \sim f^{\text{Int}}_{\bd\Theta}$.
   \item[$\langle 3 \rangle$]  Repeat $\langle 1 \rangle - \langle 2 \rangle $.
\end{algorithmic}
\end{algorithm}
Algorithm \ref{alg:Intuitive} requires no sophisticated sampling techniques, but it does require a nonlinear solver.
Typical solvers benefit from the ability to evaluate and use the Jacobian, but this is not always required.
This algorithm was mentioned in \cite{Swigon2019} for the simplest case of $p=q$,  
though the authors ultimately avoided the use of solvers and instead employed a Metropolis-Hastings algorithm based upon an approximated Jacobian term.  
Intuitive solutions as given above were not considered as an alternative within the work stemming from the two major approaches of BBE \citep{Breidt2011} and BJW \citep{Butler2018a}.

When the Jacobian of the function $\bd g$ vanishes within the domain $\mathcal P$, the input space must be partitioned into $d > 1$ disjoint sets according to the CoV Theorem.
This will lead to the same infinite class of solutions as in Proposition \ref{prop:InfiniteSolutions} of the previous section.
When using deterministic equation solvers, the distribution of starting values for $\bd\theta^*_{1:q}$ will determine the particular density out of the infinite family.

The main characteristic of intuitive solutions is made explicitly clear in Algorithm \ref{alg:Intuitive}.
In the first step, $\bd y^* \sim f_{\bd Y}$ and $\bd\theta_{(q+1):p}^* \sim f_{\bd\Theta_{(q+1):p}}$ are drawn \emph{independently}, not from some joint density.
This implies that the solution has the feature that the response and $(p-q)$ of the parameters do not covary---the output is independent of these inputs!
Thus, when forming an intuitive solution to the SIP, one might specify a density for $(p-q)$ of the least important input parameters, as determined by an initial sensitivity analysis.
%

As a final note about intuitive solutions, the relationship between the density (\ref{eq:Intuitive2}) and the sampling steps of Algorithm \ref{alg:Intuitive} will prove useful in later sections (specifically, \ref{sec:BBEisCOV}, \ref{sec:BJWisCOV}, and~\ref{sec:ALLisCov}).

\section{ Breidt et al. 2011 (BBE) and Related Work}   \label{sec:BBE}

The IP under current consideration was first (to the best of our knowledge) described in \cite{Breidt2011} under the name ``inverse sensitivity problem'' in the first of three papers (Parts I, II, III). 
The authors give an algorithm to compute an approximate pdf of the solution, which we denote as $\widehat f^{\text{BBE}}_{\bd\Theta}$ 
for a single output ($q=1$). However, the authors do not specify the exact solution $f^{\text{BBE}}_{\bd\Theta}$ that should be approximated by $\widehat f^{\text{BBE}}_{\bd\Theta}$. 
This approximate solution relies upon (A1), (A2), and the presence of derivative information, obtained either analytically or estimated via adjoint techniques.
The algorithm and hence the approximate solution $\widehat f^{\text{BBE}}_{\bd\Theta}$ depends heavily upon discretization and as such is restricted to very low dimension $p$.

Part II \citep{Butler2012} gives a rigorous error analysis of \cite{Breidt2011}, treating both statistical error due to sampling of the data density $f_{\bd Y}$, and numerical error due to solving differential equations during the likelihood calculation.
Part III \citep{Butler2014} considers the multiresponse scenario, i.e., $q >1$. 
The Part III approach deviates from Parts I and II in that the authors discretize events in the parameter space as opposed to manifolds in the observable space. 
This framework leads to the first mention of the disintegration theorem of measure theory.

The last methodological developments within the BBE framework are \cite{Butler2017} and \cite{Mattis2019}.
\cite{Butler2017} discusses the necessary event approximations and considers an adaptive sampling algorithm to solve the SIP.
\cite{Mattis2019} focuses on computational estimates of events from samples, using adjoints to enhance low-order piecewise-defined surrogate response surfaces.

Several subsequent works have explored the use of $\widehat f^{\text{BBE}}_{\bd\Theta}$ in various contexts. 
\cite{Butler2013} illustrates the BBE approach using the Brusselator reaction–diffusion model.
\cite{ButlerHuhtala2015} compares the multiresponse methodology of \cite{Butler2014} to other UQ methods in the context of material damage from vibrational data.
\cite{Mattis2015} applies the multiresponse BBE to a groundwater contamination problem. 
Eight parameters (including source location coordinates, the contaminant source flux, etc.) are inferred from seven data points (the concentrations of the contaminant in seven wells). 
Similarly, \cite{Butler2015} describes the SIP for an application in hydrodynamic models. The authors infer two parameters---the two-dimensional vector of the Manning's $n$ parameter---given maximum water elevations at two of twelve possible observation stations. Note that, although data is available from all twelve stations, only one or two may be used in any given SIP, due to the constraint that $p \geq q$. 
To address this peculiarity of the method, the authors introduce the notion of ``geometrically distinct'' data (or quantities of interest, ``QoI''): since $q \leq p$, then each data point should contain as much information as possible (analogous to linear independence). 
At the same time (and as pointed out by the authors), two different sets of geometrically distinct data may lead to very different solutions for the otherwise same SIP.
\cite{Graham2017} also estimates probable Manning’s $n$ fields using a storm surge model and the BBE SIP framework.
Finally, \cite{Presho2017} uses the BBE framework together with the generalized multiscale finite element method for uncertainty quantification within two-phase flow problems.

More recently, \cite{Uy2019} provides useful examples and commentary on SIPs, with an emphasis on \cite{Breidt2011}; we comment further on this paper in Section~\ref{sec:OtherWork}.

\subsection{The BBE Solution to the SIP is a Change-of-Variables}   \label{sec:BBEisCOV}

In their solutions to the univariate and multivariate inverse sensitivity problem, \cite{Breidt2011} and \cite{Butler2014} provide algorithms to discretize a probability density that is never actually given in closed form.
We begin our reanalysis by deriving what this density must be.
The derivation will clarify and extend arguments of \cite{Uy2019} (Sec.~2) without the explicit use of the disintegration theorem.

Some notation and concepts used within the BBE approach are needed.
In the BBE setup there is a transverse parameterization $\bd t(\bd\theta)$ of dimension $q$, and a $(p-q)$-dimensional contour parameterization $\bd c (\bd\theta)$ of the forward map contours $\bd g$ in $\mathcal Q$.
Furthermore, there exists a one-to-one function $\bd g^t$ between $\bd t(\bd\theta)$ and $\mathcal Q$ induced by $\bd g$, although in practice this function is typically unknown.
The function $\bd g^t$ simply transforms the transverse coordinates uniquely to the observable space $\mathcal Q$ and thus operates as a $q$-dimensional forward map.


It is easiest to derive the density of the BBE solution by first thinking of how to sample from it, and this can be done akin to Algorithm \ref{alg:Intuitive} for the intuitive solutions.
Given a random sample $\bd y^* \sim f_{\bd Y}$, the solution to $\bd y^* - \bd g^t(\bd t^*) = \bd 0$ is a random draw from the density $f_{\bd T}(\bd t)  =  f_{\bd Y}\big( \bd g^t(\bd t) \big) \big| \first{\bd g^t}{\bd t} \big|$.
Next, a uniformly sampled point along the contour indexed by $\bd t^*$ is a realization $\bd c^* \sim f_{\bd C | \bd T}(\bd c \ | \ \bd T = \bd t^*)$ for the special case of a uniform distribution.
Thus the pair $(\bd t^*, \bd c^*)$ is a random draw from the joint distribution $f_{\bd T, \bd C} = f_{\bd T} \cdot f_{\bd C | \bd T}$.
Putting this back in $\bd \theta$-coordinates requires one more change-of-variables and assumes that the map $(\bd t, \bd c) \mapsto \bd\theta$ has nonvanishing Jacobian a.e.
The resulting density is thus given by the following iterated change-of-variables:
\begin{align}
   f^{\text{BBE}}_{\bd\Theta}(\bd\theta)  
   &=   f_{\bd T, \bd C}\Big( \bd c(\bd \theta), \bd t (\bd \theta) \Big)  \Big| \first{(\bd t, \bd c)}{\bd\theta} \Big| \nonumber \\
   &= f_{\bd T} (\bd t (\bd \theta)) f_{\bd C | \bd T}\Big( \bd c(\bd \theta)\ | \ \bd t (\bd \theta) \Big)  \Big| \first{(\bd t, \bd c)}{\bd\theta} \Big| \nonumber \\
   &=
   \left(  f_{\bd Y}\Big( \bd g^t\big( \bd t(\bd\theta) \big) \Big) \Big| \first{\bd g^t}{\bd t} \Big|_{\bd t(\bd\theta)} \ 
   f_{\bd C | \bd T} \Big( \bd c(\bd\theta)  \ | \ \bd t(\bd\theta) \Big)  \right)  \left| \first{(\bd t, \bd c)}{\bd\theta} \right|   \label{eq:BBE}
\end{align}
for $f_{\bd C | \bd T}$ uniform.
We have just sketched a proof of the following result.
\begin{prop} 
Suppose that $\bd g^t(\bd t)$ (induced by the full forward map $\bd g$) is a continuously differentiable, invertible map from the $q$-dimensional image of $\mathcal P$ under $\bd t(\bd\theta)$ to $\mathcal Q$.
Also suppose that together the transverse and contour parameterizations $\bd t(\bd\theta)$ and $\bd c(\bd\theta)$ jointly map to the parameter space $\mathcal P$ with nonvanishing Jacobian according to (A1).
Then the exact BBE solution is equivalent to an iterated change-of-variables given by (\ref{eq:BBE}).
\label{prop:BBEisCoV}
\end{prop}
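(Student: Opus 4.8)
The plan is to verify that the operational, sampling-based definition of the BBE solution yields exactly the density in~(\ref{eq:BBE}) by tracking a single draw through three stages and invoking the CoV Theorem at each transformation. Because BBE is specified by a procedure rather than a closed form, the strategy mirrors the relationship between Algorithm~\ref{alg:Intuitive} and the intuitive density~(\ref{eq:Intuitive2}): identify the density produced by each sampling step, then compose the results.

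First I would establish the transverse density. The first hypothesis---that $\bd g^t$ is a $C^1$, invertible map from the image of $\mathcal P$ under $\bd t(\bd\theta)$ onto $\mathcal Q$---means the CoV Theorem applies with $d=1$, so solving $\bd y^* - \bd g^t(\bd t^*) = \bd 0$ for a draw $\bd y^* \sim f_{\bd Y}$ pulls $f_{\bd Y}$ back to $f_{\bd T}(\bd t) = f_{\bd Y}\big(\bd g^t(\bd t)\big)\,\big|\first{\bd g^t}{\bd t}\big|$. Next I would treat the contour-sampling step: for fixed $\bd t^*$, drawing a point uniformly along the associated contour defines a conditional density $f_{\bd C\mid\bd T}(\bd c \mid \bd t^*)$, so the pair $(\bd t^*, \bd c^*)$ has joint density $f_{\bd T,\bd C} = f_{\bd T}\cdot f_{\bd C\mid\bd T}$ by the definition of a conditional density. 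Finally, the second hypothesis---that the joint parameterization $\bd\theta \mapsto \big(\bd t(\bd\theta), \bd c(\bd\theta)\big)$ is invertible with nonvanishing Jacobian a.e.\ per~(A1)---permits one last $p$-dimensional CoV back to $\bd\theta$-coordinates, giving $f^{\text{BBE}}_{\bd\Theta}(\bd\theta) = f_{\bd T,\bd C}\big(\bd t(\bd\theta), \bd c(\bd\theta)\big)\,\big|\first{(\bd t,\bd c)}{\bd\theta}\big|$. Substituting the previous two expressions produces~(\ref{eq:BBE}).

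The main obstacle---and precisely the step that the original BBE derivations handled by the disintegration theorem---is justifying the middle stage: that sampling ``uniformly along a contour'' corresponds to a genuine conditional density $f_{\bd C\mid\bd T}$, and that chaining it with $f_{\bd T}$ delivers the advertised joint law in the $(\bd t,\bd c)$ coordinates. The payoff of the present route is that, once $\bd c(\bd\theta)$ together with $\bd t(\bd\theta)$ is granted to supply a bona fide coordinate system with nonvanishing Jacobian, this ``disintegration'' is nothing more than the elementary factorization $f_{\bd T,\bd C} = f_{\bd T}\cdot f_{\bd C\mid\bd T}$, and the passage between coordinate systems is the ordinary CoV Theorem. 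I would also be careful to note that the two smoothness-and-invertibility hypotheses are exactly what force $d=1$ in each application of CoV, so no sum over sheets appears; were $\bd g$ genuinely many-to-one, one would instead recover a mixture as in Proposition~\ref{prop:InfiniteSolutions}.
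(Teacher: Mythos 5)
Your proposal is correct and takes essentially the same approach as the paper: the paper likewise derives the BBE density by tracking a single draw through the same three stages (pulling $f_{\bd Y}$ back through $\bd g^t$ to obtain $f_{\bd T}$, factoring $f_{\bd T,\bd C} = f_{\bd T}\cdot f_{\bd C \mid \bd T}$ with uniform contour sampling, then one final change-of-variables from $(\bd t,\bd c)$ back to $\bd\theta$-coordinates), explicitly modeled on the sampling logic of Algorithm~\ref{alg:Intuitive}. There is no substantive difference between your argument and the paper's proof sketch.
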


%

The main difficulty with directly using the CoV form of the BBE solution above is the lack of closed-form transverse and contour parameterizations.
Hence, examples in which the exact density $f^{\text{BBE}}(\bd\theta)$ is available are hard to come by.
The following two examples do however admit closed-form solutions and illustrate the concepts above.
\subsection{Example: Linear Map}\label{ssec:lin-map}

Here we find the exact solution density corresponding to Section~3.1 of \cite{Butler2014}. 
Let $\bd g(\bd\theta) \defi  \bd A \bd\theta$ so that $\bd Y = \bd A \bd\Theta$.
Suppose that $\bd A^\perp$ is a $(p-q)\times p$ representation of the null space of row($\bd A$).

Consider the transverse parameterization that follows the Jacobian of the forward map, namely, $\bd t(\bd\theta) \defi \bd A \bd\theta$.
The connection between the data-space and transverse coordinates is the trivial one: $\bd y = \bd g^t(\bd t) \defi \bd I_q \bd t$.
Thus we have $f_{\bd T}(\bd t) = f_{\bd Y}(\bd t)$.
Contours of $\bd g$ are described by $\bd c(\bd\theta; \bd t) \defi \bd A^\perp \bd\theta$.
The choice of uniformly distributed contours implies that
$f_{\bd C | \bd T}(\bd c | \bd t)  \propto  \mathbbm 1 \big\{ \bd l(\bd t) \leq \bd c \leq \bd u(\bd t) \big\}$, where $\bd l$ and $\bd u$ are given lower and upper bounds (in order to define a valid probability distribution).
If the pre-specified bounds are aligned with $\bd A, \bd A^\perp$, then $\bd l$ and $\bd u$ will not depend on $\bd t$.

Transforming $(\bd T, \bd C)$ to $\bd\Theta$ comes from the invertible augmented system 
\begin{align*}
   \begin{bmatrix}  \bd t \\ \bd c  \end{bmatrix}  
   &=  
   \left( \bd A_{\text{aug}} \defi  \begin{bmatrix}  \bd A \\ \bd A^\perp  \end{bmatrix} \right)  \bd\theta     
\end{align*}
having constant Jacobian $\left|\bd A_{\text{aug}}\right| $, and this implies the result
\begin{align*}
   f^{\text{BBE}}_{\bd\Theta}(\bd\theta)  &\propto  f_{\bd Y}(\bd A \bd\theta) \ \mathbbm 1 \big\{ \bd l(\bd A \bd\theta) \leq \bd A^\perp \bd\theta \leq \bd u(\bd A \bd\theta) \big\} 
   \ .
\end{align*}

The top row of Figure \ref{fig:BBE} shows samples of the BBE solution overlaid upon contours of the forward map $g(\bd\theta) \defi [-1/3, 4/3] \bd\theta$ (left).
The observable density $f_Y(y)$ is a $N\big(1/2, (1/4)^2\big)$ truncated to (0,1); this is plotted over the pushforward histogram of the BBE solution samples (right), confirming the solution.

\begin{figure}[h!t]   \begin{center}
\includegraphics[width=0.8\textwidth]{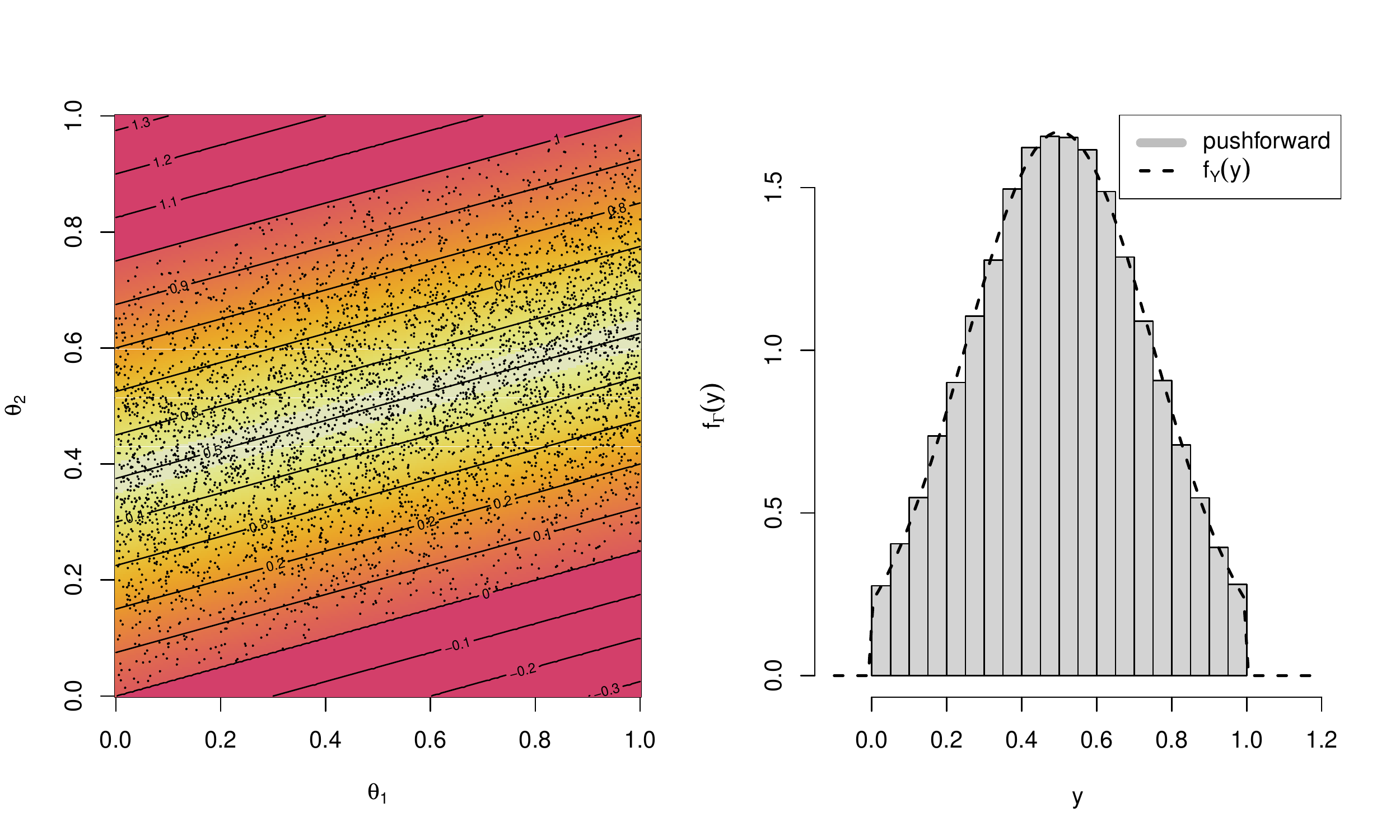} \\
\vspace*{-30pt}
\includegraphics[width=0.8\textwidth]{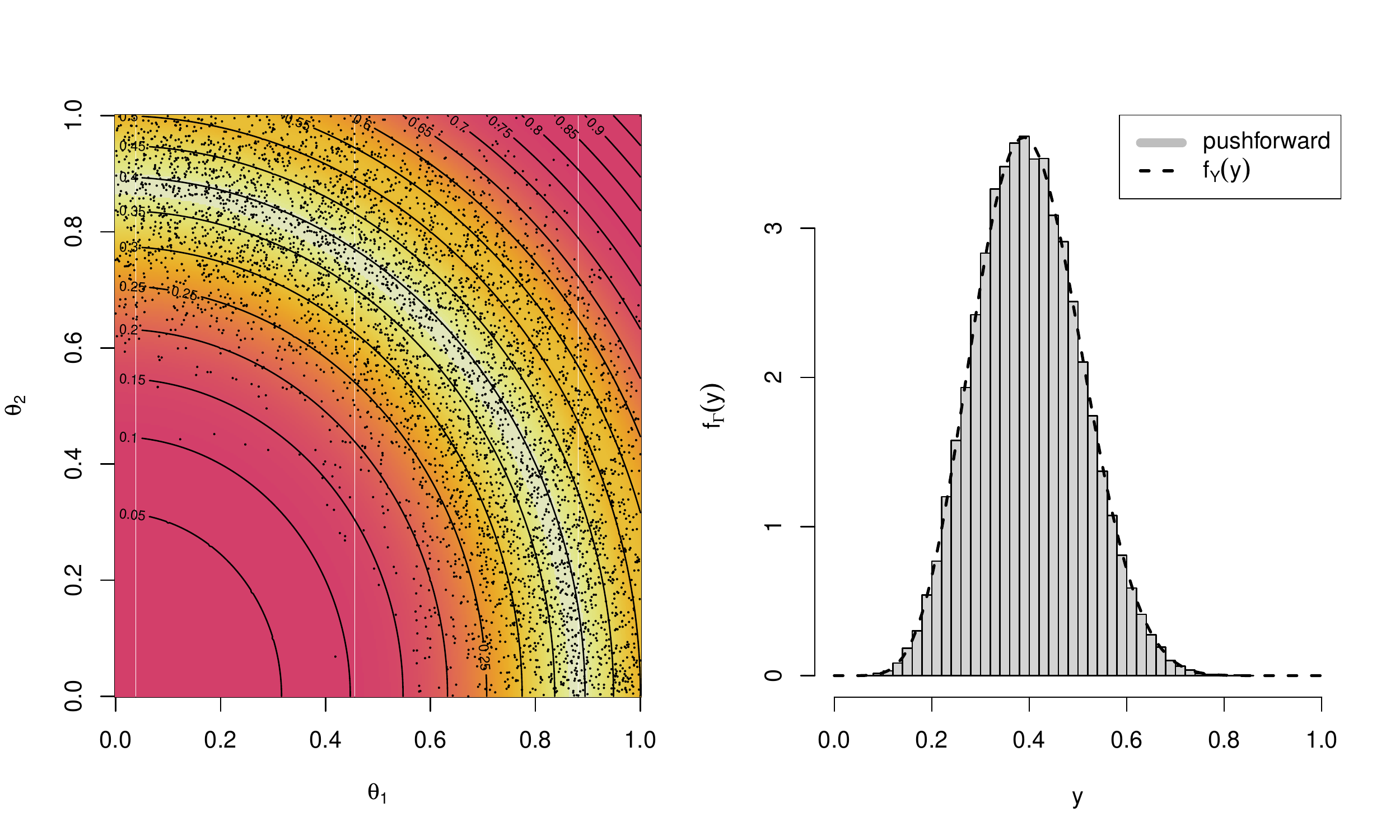}
\caption{
Left: Samples of $\bd \Theta \sim f^{\text{BBE}}_{\bd\Theta}$ and heatmap of $f^{\text{BBE}}_{\bd\Theta}(\bd\theta)$ with contours of the function $g(\bd\theta)$ overlaid; 
Right: The pushforward of the BBE solution samples ($g(\bd\Theta)$, histogram) compared to the given density $f_Y(y)$.
Top row: Linear map in Section~\ref{ssec:lin-map}; 
Bottom row: Nonlinear map in Section~\ref{ssec:nlin-map}. 
}
\label{fig:BBE}
\end{center}
\end{figure}

\subsection{Example: Nonlinear Map}\label{ssec:nlin-map}

Now we work through Example~2 of \cite{Butler2014}. 
Let $Y = g(\bd\Theta)$ with $g(\theta_1, \theta_2)  \defi  \tfrac{1}{2}\big( \theta_1^2 + \theta_2^2 \big)$ with
$0 < \theta_1, \theta_2 \leq 1$.  
(Note we have slightly modified the forward map by introducing the factor of $1/2$, simply to clean up the results.)    
Suppose $f_Y(y)$ is a pdf defined on $(0,1)$, such as a $Beta(a,b)$ distribution.

The symmetry of $g(\bd\theta)$ allows for tidy transverse and contour parameterizations via polar coordinates.
Consider the transverse parameter defined by the distance to the origin
$r(\bd\theta) \defi \sqrt{\theta_1^2 + \theta_2^2}$ (i.e., the radius), and
the contour parameter defined by the polar angle to the positive $\theta_1$-axis,
$\phi(\bd\theta) \defi \text{atan2}(y=\theta_2, x=\theta_1)$.
Observe that the polar angle is increasingly restricted as $r$ goes from 1 to $\sqrt{2}$.

The connection between the data-space and transverse coordinates is $y = g^t(r) \defi \frac{1}{2} r^2$.
Thus we have $f_{R}(r) = f_{Y}(\frac{1}{2} r^2) \cdot r$.
Here, the choice of uniformly distributed contours implies that
\begin{align*}
   f_{\Phi | R}(\phi | r)  &=  
   \left\{  \begin{array}{r l}
   \frac{2}{\pi} \quad \mathbbm 1 \big\{ \ 0 \ < \phi < \ \tfrac{\pi}{2} \big\}  &  \quad 0 < r \leq 1  \\
   \frac{1}{\phi_2 - \phi_1} \ \mathbbm 1 \big\{ \phi_1 < \phi < \phi_2 \big\}  &  \quad 1 < r \leq \sqrt{2} 
   \end{array}  \right.  \\
   \phi_1  &=  \text{atan2}\left( y = \sqrt{r^2 - 1}, x=1 \right)  \\  
   \phi_2  &=  \text{atan2}\left( y = 1, x=\sqrt{r^2 - 1} \right). 
\end{align*}
Transforming $(R,\Phi)$ to $\bd\Theta$ comes with Jacobian $\frac{1}{r}$ evaluated at $r = \sqrt{\theta_1^2 + \theta_2^2}$, and therefore
\begin{align*}
   f^{\text{BBE}}_{\bd\Theta}(\bd\theta)  &=  f_Y\Big( \tfrac{1}{2}(\theta_1^2 + \theta_2^2) \Big)  \  
   f_{\Phi | R}\Big( \text{atan2}(\theta_2, \theta_1) \ | \ \sqrt{\theta_1^2 + \theta_2^2} \Big)  \ .
\end{align*}

The bottom row of Figure \ref{fig:BBE} shows the solution with contours of the nonlinear forward map (left).
The observable density $f_Y(y)$ is a $Beta(8,12)$, and this is plotted over the pushforward histogram of the BBE solution samples (right).

\section{Butler et al. 2018 (BJW) and Related Work}  \label{sec:BJW}

Butler, Jakeman, and Wildey (``BJW'') \citep{Butler2018a} consider the SIP (\ref{eq:InvProb}) as above and call the use of their solution ``consistent Bayesian inference'' or ``pushforward based inference''.
Here the use of the term ``consistent'' does not refer to the statistical limiting sense, but rather to the fact that the solution pushes forward to a given density.
(We prefer neither of these terms since we will show that this BJW solution is neither Bayesian [Sections \ref{sec:BJWupdating}, \ref{sec:BJWisCOV}] nor inference [Section \ref{sec:Inference}]).
The exact solution to the SIP proposed by \cite{Butler2018a} is
\begin{align}
  f_{\bd\Theta}^{\text{BJW}} \big( \bd\theta \big)  &= 
  \ut f_{\bd\Theta} \big( \bd\theta \big)
  \frac{ f_{\bd Y}\big( \bd g(\bd\theta) \big) }{ \ut f_{\bd\Gamma}\big( \bd g(\bd\theta) \big) } \ ,  \label{eq:BJW}
\end{align}
where $\ut f_{\bd\Theta}$ is a given density and $\ut f_{\bd\Gamma}$ is its pushforward through $\bd g$.  
The approximate solution $\widehat f_{\bd\Theta}^{\text{BJW}} \big( \bd\theta \big)$ is what gets used in practice, and this has the same form except the density in the denominator is replaced by an approximate density $\widehat{\ut f}_{\bd\Gamma}\big( \bd g(\bd\theta) \big)$.

The BJW solution to the SIP was at least initially called Bayesian because Bayes' Rule was used in the derivation of the solution, and because the form of the solution explicitly features a given initial density $\ut f_{\bd\Theta}(\bd\theta)$ (potentially playing a role similar to a prior distribution) times a weighting function (akin to a likelihood function).
Indeed, one must specify this initial $p$-dimensional density $\ut f_{\bd\Theta}(\bd\theta)$, similar to the choice of a prior distribution during Bayesian inference. 
Unlike standard Bayesian methods, however, the derivation also explicitly invokes the disintegration theorem of measure theory, and the applied solution relies on kernel density estimation for the denominator term.
The practical reliance upon density estimation restricts the application of BJW to very few observable QoI (small $q$).

Subsequent works explore numerical aspects, applications, and extensions of \cite{Butler2018a}.
\cite{Butler2018b} studies the convergence of kernel density approximate solutions to the analytic BJW-derived densities.
\cite{Walsh2018} and \cite{Butler2020a} propose algorithms for optimal experimental design which maximize the expected information gain between initial and updated densities.
\cite{Butler2020c} applies the SIP framework to a drum manufacturing process. 
The QoI are two (of twenty possible, observable) eigenmodes of the drum vibration, and the parameters are two diffusion parameters.
\cite{Butler2020d} generalizes the BJW solution to ``stochastic'' forward maps; 
we will discuss this generalization further in Section \ref{sec:Butler2020} after an example in which (\ref{eq:BJW}) has closed form.
\cite{Bruder2020} uses multi-fidelity methods and Gaussian process regression models to efficiently solve the SIP.
\cite{Tran2021} focuses on a materials science application, and also augments the BJW approach with a regression model based on Gaussian processes to decrease the computational expense.
Finally, focusing on problems yielding large amounts of time-series data, \cite{Mattis2022} learn the parameter-to-observable (QoI) map from data, thus allowing for specification of the observable probability distribution. Learning this map may rely on clustering the data and creating a partition on the parameter-space. 
Once this is completed through the ``Learning Uncertain Quantities'' framework, then the SIP can be solved as in previous works above.

\subsection{Example: Linear Transformation of a Multivariate Gaussian Vector}  \label{sec:ExampleLinearBJW}

Here we provide an analytic solution to a problem that was partly solved by \cite{Butler2020a}.
Let $\bd g(\bd\theta) \defi \bd A \bd\theta$ so that $\bd Y = \bd A \bd\Theta$.
Suppose that $\bd Y \sim N_q(\bd\mu_y, \bd\Sigma_y)$ and an initial density for $\bd\Theta$ is also multivariate Gaussian: $\ut f_{\bd\Theta}(\bd\theta) \sim N_p(\bd\mu_\theta, \bd\Sigma_\theta)$.
The pushforward of $\ut f_{\bd\Theta}$ under $\bd g$ has distribution
$\bd\Gamma \sim N_p(\bd A\bd\mu_\theta, \bd A \bd\Sigma_\theta \bd A^\top)$.
Using these three given densities, the BJW solution (\ref{eq:BJW}) is
\begin{align*}
   f^{\text{BJW}}_{\bd\Theta}(\bd\theta)  &\propto   
   \exp\left\{ -(\bd\theta - \bd\mu_\theta)^\top \bd\Sigma_\theta^{-1} (\bd\theta - \bd\mu_\theta) /2 \right\}  \\
   & \ \ \cdot \  \frac{ \exp\left\{ -(\bd A\bd\theta - \bd\mu_y)^\top \bd\Sigma_y^{-1} (\bd A\bd\theta - \bd\mu_y) /2 \right\} }{ \exp\left\{ -(\bd A\bd\theta - \bd A\bd\mu_\theta)^\top (\bd A \bd\Sigma_\theta \bd A^\top)^{-1} (\bd A\bd\theta - \bd A\bd\mu_\theta) /2 \right\} }
\end{align*}
After expanding, combining terms, and completing the quadratic form, it is seen that the answer must be
\begin{align}
   f^{\text{BJW}}_{\bd\Theta}(\bd\theta)  &=  N_p(\wtilde{\bd\mu}, \wtilde{\bd\Sigma})  \nonumber  \\
   \wtilde{\bd\mu}  &=  \wtilde{\bd\Sigma} \left(  \bd A^\top \bd\Sigma_y^{-1} \bd\mu_y  
               -  \bd A^\top (\bd A \bd\Sigma_\theta \bd A^\top)^{-1} \bd A \bd\mu_\theta  
               +  \bd\Sigma_\theta^{-1} \bd\mu_\theta \right)   \label{eq:muti} \\
   &= \wtilde{\bd\Sigma}  \bd A^T \bd\Sigma_y^{-1}(\bd \mu_y -\bd A\bd\mu_\theta) + \bd\mu_\theta  \\
   \wtilde{\bd\Sigma}  &=  \left( \bd A^\top \bd\Sigma_y^{-1} \bd A  
                  -  \bd A^\top (\bd A \bd\Sigma_\theta \bd A^\top)^{-1} \bd A  
                  +  \bd\Sigma_\theta^{-1}  \right)^{-1}   \label{eq:sigti}  \\
   &=  \bd\Sigma_\theta - \bd\Sigma_{\theta} \bd A^T\left\{(\bd\Sigma_y^{-1} - (\bd A\bd\Sigma_\theta \bd A^T)^{-1})^{-1} + \bd A \bd \Sigma_\theta \bd A^T \right\} ^{-1} \bd A \bd\Sigma_\theta
\end{align}
Note that when $\bd A$ is a square invertible matrix, $\wtilde{\bd\Sigma} = \bd A^{-1} \bd\Sigma_y \bd A^{-\top}$, and $\wtilde{\bd\mu} = \bd A^{-1} \bd\mu_y$. 
Otherwise, as shown in the appendix (Lemma \ref{lem:BJW}), $\bd A \wtilde{\bd\mu} = \bd\mu_y$ and $\bd A \wtilde{\bd\Sigma} \bd A^\top = \bd\Sigma_y$, confirming that the pushforward indeed has the distribution $\bd A\bd\Theta \sim N_q(\bd\mu_y, \bd\Sigma_y)$.

\subsection{Sequential Updating(?)}  \label{sec:BJWupdating}

With its apparent ability to update an initial distribution through (\ref{eq:BJW}), the BJW solution has been compared to (and presented as an alternative to) Bayesian inference.
We show however that such comparisons are misleading.

Suppose that an analyst wants to compare two $q$-vectors $\bd Y_1, \bd Y_2$ of data to a model $\bd g$ that produces $q$ comparable outputs.
The model takes $p$ inputs through the vector $\bd\theta$, and the goal is to estimate these unknown quantities.
Because $\bd g$ is a well-defined function, it cannot simultaneously map $\bd\theta$ to both $\bd Y_1$ and $\bd Y_2$.
The analyst therefore decides to use the BJW method to update an initial distribution $\ut f_{\bd\Theta}$ before updating it once more so that both pieces of information can be used.
This procedure yields a final answer
\begin{align*}
   \ut{\ut f}_{\bd\Theta} \big( \bd\theta \big)
   \frac{ f_{\bd Y_2}\big( \bd g(\bd\theta) \big) }{ \ut{\ut f}_{\bd \Gamma}\big( \bd g(\bd\theta) \big) }
   &= 
   \left( \ut f_{\bd\Theta} \big( \bd\theta \big)
   \frac{ f_{\bd Y_1}\big( \bd g(\bd\theta) \big) }{ \ut f_{\bd \Gamma}\big( \bd g(\bd\theta) \big) } \right)
   \frac{ f_{\bd Y_2}\big( \bd g(\bd\theta) \big) }{ \ut{\ut f}_{\bd \Gamma}\big( \bd g(\bd\theta) \big) } \\
   &=  
   \ut f_{\bd\Theta} \big( \bd\theta \big)
   \frac{ f_{\bd Y_2}\big( \bd g(\bd\theta) \big) }{ \ut f_{\bd \Gamma}\big( \bd g(\bd\theta) \big) }
\end{align*}
since $\ut{\ut f}_{\bd\Gamma} \equiv f_{\bd Y_1}$ (as $\ut{\ut f}_{\bd\Theta}$ solved the initial SIP).
This is the BJW solution using $\bd Y_2$, and as such, the final answer does not depend on $\bd Y_1$!
If done in reverse order, the final answer would not depend upon $\bd Y_2$.
In a Bayesian analysis, both observations would be used.
(Thankfully, after 2018, the terminology in the literature related to BJW moved away from the name ``consistent Bayes'' and the analogy of Bayesian inference.)

Thus sequential updating is not a defensible way to deal with \emph{replicate} measurements within an SIP.
The next section explores a recent method that allows for replicates.

\subsection{The Extension of BJW to ``Stochastic Maps'' is Actually an Application of BJW}  \label{sec:Butler2020}

In \cite{Butler2020b}, the authors extend the BJW framework to so-called ``stochastic maps''---forward maps that include either embedded or additive parameters meant to capture irreducible aleatoric uncertainty. In their terms, if $\bd g(\bd \theta)$ is a deterministic map, then the corresponding stochastic map $\hat{\bd g}(\bd \theta, \bd \epsilon)$ incoporates either embedded or additive noise. In the additive case, we then have the following system defining the SIP:
\begin{align}
   \bd Y  &=  \hat{\bd g}(\bd\Theta, \bd E)  \quad\quad  
   \hat{\bd g}(\bd\theta, \bd\epsilon) = \bd\theta + \bd\epsilon
\end{align}
%
%
But this is just a special case of BJW. 
That is, there is nothing inherently stochastic in how the new $\bd \epsilon$ term interacts with $\bd g$; if $\bd g(\cdot)$ is deterministic, then so is $\hat{\bd g}(\bd\theta, \bd\epsilon)$. 
It is when the parameters are turned into random variables ($\bd\theta \rightarrow \bd\Theta$ and $\bd\epsilon \rightarrow \bd E$) that the response becomes stochastic, and this has nothing to do with $\hat{\bd g}$.

The addition of these $\bd\epsilon$ ($\bd E$) parameters simply accommodates the case when there are more observables than original parameters of interest $\bd\theta$.
As such, this redefined forward map allows the BJW solution to be used when there are replicate measurements on the same observable.
The BJW solution for the system above gives the ``discrepancy'' tradeoff between the $p+q$ random parameters $\bd\Theta$ and $\bd E$ such that the marginal propagated density $f_{\bd Y}$ is unaffected.



\subsubsection{Gaussian Mean Estimation as ``Stochastic Map'' Inversion}

To see the difference between the ``stochastic map'' BJW solution and a statistical solution, consider the simplest possible scenario where the goal is to estimate a single parameter of interest $\mu$ ($q = 1$) from $n$ independent replicate samples.
The statistical model is $(Y_i | \mu) = \mu + E_i$ for $E_i \stackrel{\text{iid}}{\sim} N(0,\sigma^2_\epsilon)$ ($i = 1, \ldots, n$), or equivalently, 
$\bd Y \sim N_n(\mu \bd 1_n, \sigma^2_\epsilon I_n)$ for $\sigma^2_\epsilon$ given. 
Within the BJW framework, the map is $\hat{\bd g}(\mu, \bd\epsilon) = \mu \bd 1_n + \bd\epsilon$ so that $\bd Y = M \bd 1_n + \bd E$ (where $M$ is the random variable form of $\mu$).

The stochastic BJW solution comes from a system of $n$ equations in $n+1$ unknowns.
Note that if any $E_i$ is fixed at a constant value or given a known distribution, the system becomes square and $M$ is then determined by a single observable.
For example, if $E_1$ is set, then $Y_1$ completely determines the density for $M = Y_1 - E_1$; adding more observables $Y_{n+1}, \ldots$ only provides information on $\epsilon_{n+1}, \ldots$.
In other words, the information on $M$ is contained within a single, arbitrary $Y_1$ and does not increase as more information is gathered.
However, if no $E_i$ is set, then the BJW solution for the $n+1$ parameters instead produces a distribution for $M$ that converges to a point mass.

The forward map is $\bd g(\bd\theta) \defi \bd A \bd\theta$ where $\bd A \defi [\bd 1_n \vdots \bd I_n]$ and $\bd\theta_n \defi (\mu, \epsilon_1, \ldots, \epsilon_n)^\top$.
The observable distribution is $\bd Y \sim  N_n(\bd\mu_y, \bd\Sigma_y)$ with $\bd\Sigma_y = \sigma^2_y \bd I_n$.
When the initial distribution is $\ut f_{\bd\Theta}  \equiv  N_{n+1}(\bd\mu_\theta, \bd\Sigma_\theta)$ with $\bd\mu_\theta^\top  =  (\mu_0, \bd 0_n^\top)$ and $\bd\Sigma_\theta  =  \text{diag}(\sigma_0^2, \sigma^2_\epsilon, \ldots, \sigma^2_\epsilon)$,
the BJW solution is $N_{n+1}(\wtilde{\bd\mu}, \wtilde{\bd\Sigma})$ with mean and covariance given by (\ref{eq:muti}, \ref{eq:sigti}).
We derive the closed-form moments in Section \ref{sec:App2} of the appendix.
The matrix algebra is tedious but reveals some interesting features of the BJW solution, namely that the mean and variance take the form
\begin{align}
   \wtilde{\bd\mu}  
   &=
   \begin{bmatrix}  \overline\mu_y + \mathcal O(n^{-1})  \\  (\bd\mu_y - \overline\mu_y \bd 1_n) +  \mathcal O(n^{-1}) \bd 1_n  \end{bmatrix}  
   \quad\quad
   \wtilde{\bd\Sigma}  =  
   \begin{bmatrix}  \mathcal O(n^{-1})  &  -\mathcal O(n^{-1}) \bd 1_n^\top  \\  -\mathcal O(n^{-1}) \bd 1_n  &  \sigma^2_y \bd I_n - \mathcal O(n^{-1}) \bd J_n  \end{bmatrix}  \ . 
\end{align}
The marginal variance of $M$ decreases like $\mathcal O(n^{-1})$, implying that the distribution of $M$ concentrates on $\overline\mu_y$, a simple average of the means of $\bd Y$.
Marginally, $\bd E$ converges to a distribution with covariance $\sigma^2_y \bd I_n$ that is centered on a residual vector.
Thus all the irreducible uncertainty is contained within $\bd E$, but not within the parameter of interest, contradicting the intended purpose of the BJW approach.

For any $n$, $\wtilde{\bd\Sigma}$ is a dense matrix, meaning that $M$ and $\bd E$ covary and all elements of $\bd E$ covary with one another.
Furthermore, the mean of $M$ is a linear combination of $\overline\mu_y$ and $\mu_0$.
While this is reminiscent of Bayesian inference, it should be noted that the terms in the BJW solution are considerably more complicated and thus harder to interpret.
In addition, a statistical solution involves simple scalar arithmetic while the BJW solution comes from inverting and multiplying potentially large matrices.


\subsection{The BJW Solution to the SIP is a Change-of-Variables}   \label{sec:BJWisCOV}

Due to its generality, the use of measure theory in describing and solving problems is often preferable.
However, if restricting one's attention to absolutely continuous random variables (A2) in practice, then the language of measure theory can obscure concepts which are quite simple in nature.
In the next proposition we show that the BJW density can be derived in a straightforward manner under Assumption (A1).
The second half of its proof relies upon the notion of \emph{auxiliary variables}.
For some choice (but typically, infinitely many choices) of $p-q$ auxiliary variables $\bd Y^c$ via maps $\bd g^c(\cdot) \defi \left( g_{q+1}(\cdot),\ldots,g_p(\cdot) \right)$, the following augmented system is locally invertible:
\small     
\begin{align}
  \bd\Theta = 
  \left[  \begin{array}{l}
  \Theta_1  \\
  \vdots    \\ 
  \Theta_q  \\[1ex]
  \Theta_{q+1}  \\[1ex]
  \vdots    \\
  \Theta_p
  \end{array}  \right] 
  \quad
  \begin{array}{c}
  \bd g_{\text{aug}}  \\
  \longmapsto  
  \end{array}
  \quad
  \left[  \begin{array}{lcl}
  Y_1      &  \hspace*{-6pt} =     \hspace*{-6pt}  &  g_1(\bd\Theta)  \\
  \vdots   & &  \\ 
  Y_q      &  \hspace*{-6pt} =     \hspace*{-6pt}  &  g_q(\bd\Theta)  \\[1ex]
  \hline
  Y_{q+1}  &  \hspace*{-6pt} \defi \hspace*{-6pt}  &  g_{q+1}(\bd\Theta)  \\
  \vdots   & &  \\
  Y_p      &  \hspace*{-6pt} \defi \hspace*{-6pt}  &  g_p(\bd\Theta) \hspace*{11pt} 
  \end{array}  \right] 
  = \begin{bmatrix} \bd Y \\ \bd Y^c \end{bmatrix}
  \defi  \bd Y_{\text{aug}}  \ .  \label{eq:AuxVars}
\end{align}
\normalsize
For the time being we shall take the existence of $\bd g_{\text{aug}}$ 
defining the locally invertible system as given, but more details will be given in Section \ref{sec:ALLisCov}.

\begin{prop} 
Under assumption (A1) the exact BJW solution can be derived from a CoV solution for $p \geq q$.
\label{prop:BJWisCoV}
\end{prop}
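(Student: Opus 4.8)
The plan is to reduce both regimes---$p=q$ and $p>q$---to the pushforward--pullback identity (\ref{eq:PushforwardPullback}) already in hand, applied twice: once to the initial density and once to the target. For $p=q$ the key observation is that (\ref{eq:PushforwardPullback}) holds for \emph{any} density written in terms of its own pushforward, so applying it to the pair $\big(\ut f_{\bd\Theta}, \ut f_{\bd\Gamma}\big)$---legitimate because $\ut f_{\bd\Gamma}$ is by definition the pushforward of $\ut f_{\bd\Theta}$ through $\bd g$---gives $\ut f_{\bd\Theta}(\bd\theta) = \ut f_{\bd\Gamma}\big(\bd g(\bd\theta)\big)\,\big|\first{\bd g}{\bd\theta}\big|$. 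Hence the ratio $\ut f_{\bd\Theta}(\bd\theta)/\ut f_{\bd\Gamma}\big(\bd g(\bd\theta)\big)$ appearing in (\ref{eq:BJW}) is exactly the Jacobian factor $\big|\first{\bd g}{\bd\theta}\big|$, and the BJW formula collapses to $f_{\bd Y}\big(\bd g(\bd\theta)\big)\,\big|\first{\bd g}{\bd\theta}\big|$, which is precisely the CoV solution (\ref{eq:CoVsolution}). In the many-to-one case, A1 lets me partition $\mathcal P \setminus \mathcal P_0$ into branches and repeat the same cancellation branchwise, exactly as in Proposition \ref{prop:CoVpq}.

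For $p>q$ the system is underdetermined, so I would augment it with $p-q$ auxiliary outputs $\bd Y^c = \bd g^c(\bd\Theta)$ as in (\ref{eq:AuxVars}), producing a square map $\bd g_{\text{aug}} = (\bd g, \bd g^c)$. The crux is the choice of target density on the augmented output space: retain the given $f_{\bd Y}$ as the marginal of the first $q$ coordinates, but borrow the auxiliary conditional from the initial pushforward, i.e. set $f_{\bd Y_{\text{aug}}}(\bd y, \bd y^c) \defi f_{\bd Y}(\bd y)\,\ut f_{\bd Y^c|\bd Y}(\bd y^c|\bd y)$, where $\ut f_{\bd Y^c|\bd Y}$ is read off from the pushforward $\ut f_{\bd Y_{\text{aug}}}$ of $\ut f_{\bd\Theta}$ through $\bd g_{\text{aug}}$. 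Pulling $f_{\bd Y_{\text{aug}}}$ back through the now-square, locally invertible $\bd g_{\text{aug}}$ is a genuine CoV, and invoking (\ref{eq:PushforwardPullback}) a second time---now for $\big(\ut f_{\bd\Theta}, \ut f_{\bd Y_{\text{aug}}}\big)$---rewrites the emerging Jacobian as $\ut f_{\bd\Theta}(\bd\theta)/\ut f_{\bd Y_{\text{aug}}}\big(\bd g_{\text{aug}}(\bd\theta)\big)$. Evaluated along $\bd g_{\text{aug}}(\bd\theta)$, the auxiliary conditional factors in numerator and denominator cancel (since the $q$-marginal of $\ut f_{\bd Y_{\text{aug}}}$ is $\ut f_{\bd\Gamma}$), leaving exactly $\ut f_{\bd\Theta}(\bd\theta)\,f_{\bd Y}\big(\bd g(\bd\theta)\big)/\ut f_{\bd\Gamma}\big(\bd g(\bd\theta)\big)$, the BJW density (\ref{eq:BJW}).

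It then remains to exhibit the augmented map, which A1 supplies directly: since the left $q\times q$ block of the Jacobian is invertible on $\mathcal P\setminus\mathcal P_0$, completing $\bd g$ by, e.g., $\bd g^c(\bd\theta) = \bd\theta_{(q+1):p}$ yields a block-triangular Jacobian whose determinant equals that of the invertible $q\times q$ block, hence nonvanishing a.e. The obstacle I anticipate is conceptual rather than computational: I must make precise that the valid choices of $\bd g^c$---and therefore of the auxiliary conditional $\ut f_{\bd Y^c|\bd Y}$---are highly non-unique, as (\ref{eq:AuxVars}) already signals, yet the single BJW density is recovered regardless of which admissible completion is used, precisely because the auxiliary factors always cancel. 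This non-uniqueness ties back to the infinitude of solutions in Proposition \ref{prop:InfiniteSolutions}, and I would state it carefully so as not to suggest the derivation privileges one particular augmentation.
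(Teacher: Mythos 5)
Your proposal is correct and follows essentially the same route as the paper: for $p=q$, use the pushforward--pullback identity (\ref{eq:PushforwardPullback}) applied to $\big(\ut f_{\bd\Theta}, \ut f_{\bd\Gamma}\big)$ to identify the BJW ratio with the Jacobian, and for $p>q$, augment via (\ref{eq:AuxVars}) and set the target auxiliary conditional equal to the initial pushforward's conditional $\ut f_{\bd\Gamma^c|\bd\Gamma}$ so the conditional factors cancel, recovering (\ref{eq:BJW}). Your closing remark on the non-uniqueness of admissible completions $\bd g^c$ is a sound addition that the paper only makes informally in the discussion following its proof.
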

\begin{proof}
Fix some initial $p$-dimensional density $\ut f_{\bd\Theta}(\bd\theta)$.

First consider the case that $p = q$.
%
Using the same reasoning leading to (\ref{eq:PushforwardPullback}), the initial density can be written in terms of its pushforward on $\mathcal P \setminus \mathcal P_0$, regardless of the global invertibility of $\bd g$.
Equivalently, the Jacobian determinant is
\begin{align}
   \Big| \first{\bd g}{\bd\theta} \Big|  &=  
   \frac{ \ut f_{\bd\Theta}(\bd\theta) }{ \ut f_{\bd\Gamma}(\bd g(\bd\theta)) }  
\end{align}
wherever $\ut f_{\bd\Theta} > 0$ (and hence $\ut f_{\bd\Gamma} > 0$) on $\mathcal P \setminus \mathcal P_0$.
Now we have
\begin{align*}
  f_{\bd\Theta}^{\text{BJW}} \big( \bd\theta \big)  &= 
  f_{\bd Y}\big( \bd g(\bd\theta) \big)
  \frac{ \ut f_{\bd\Theta} \big( \bd\theta \big) }{ \ut f_{\bd \Gamma}\big( \bd g(\bd\theta) \big) }
   =  f_{\bd Y}\big( \bd g(\bd\theta) \big) \ \Big| \first{\bd g}{\bd\theta} \Big|  
\end{align*}
which is the density corresponding to a CoV, as desired.
The explicit mixture of CoVs from $\mathcal Q$ to $\mathcal P$ can be constructed exactly as in the proof of Proposition \ref{prop:CoVpq}.
Note that in this $p = q$ case, $\ut f_{\bd\Theta}$ did not affect the solution to the SIP since $\big| \first{\bd g}{\bd\theta} \big|$ is invariant to this choice.

The case of $p > q$ follows exactly the same reasoning as above but relies on auxiliary variables, as per (\ref{eq:AuxVars}).
Additionally, it is slightly more transparent to start with $f^{\text{CoV}}$, and show that for a certain choice, it becomes the exact BJW solution.
Replacing ``$\bd g$'', ``$\bd Y$'', and ``$\bd\Gamma$'' above with ``$\bd g_{\text{aug}}$'', ``$\bd Y_{\text{aug}}$'', and ``$\bd\Gamma_{\text{aug}}$'',
we have by previous reasoning that 
\begin{align*}
   \Big| \first{\bd g_{\text{aug}}}{\bd\theta} \Big|  &=  
   \frac{ \ut f_{\bd\Theta}(\bd\theta) }{ \ut f_{\bd\Gamma_{\text{aug}}}(\bd g_{\text{aug}}(\bd\theta)) }  \\
   f^{\text{CoV}}_{\bd\Theta}(\bd\theta)  &=  
   f_{\bd Y_{\text{aug}}}\big( \bd g_{\text{aug}}(\bd\theta) \big) \ \Big| \first{\bd g_{\text{aug}}}{\bd\theta} \Big|  \\
   %
   %
   &=  \ut f_{\bd\Theta}(\bd\theta) \  
   \frac{ f_{\bd Y_{\text{aug}}}\big( \bd g_{\text{aug}}(\bd\theta) \big) }{ \ut f_{\bd \Gamma_{\text{aug}}}\big( \bd g_{\text{aug}}(\bd\theta) \big) }  \\
   &=  \ut f_{\bd\Theta}(\bd\theta) 
   \frac{ f_{\bd Y}\big( \bd g(\bd\theta) \big) }
        { \ut f_{\bd\Gamma}\big( \bd g(\bd\theta) \big) } \ 
   \frac{ f_{\bd Y^c | \bd Y} \big( \bd g^c(\bd\theta) \ | \ \bd y = \bd g(\bd\theta) \big) }
        { \ut f_{\bd\Gamma^c | \bd\Gamma} \big( \bd g^c(\bd\theta) \ | \ \bd\gamma = \bd g(\bd\theta) \big) }  \ ,
\end{align*}
where $\bd\Gamma_{\text{aug}} = \{\bd\Gamma, \bd\Gamma^c\}$ and $\ut f_{\bd \Gamma_{\text{aug}}}$ is the pushforward of $\ut f$ through $\bd g_{\text{aug}}$.
Setting $f_{\bd Y^c|\bd Y}  \equiv  \ut f_{\bd\Gamma^c|\bd\Gamma}$ leads to cancellation that establishes that 
$f^{\text{BJW}}_{\bd\Theta}(\bd\theta)$ is indeed derivable from a CoV solution.
\end{proof}

In the proof of Proposition above, Bayes' Theorem was not invoked and the disintegration theorem did not need to be called explicitly.
All that was required was that the joint density of unobserved and observed variables $\bd Y_{\text{aug}} = \{ \bd Y, \bd Y^c \}$ could be factored as a conditional times a marginal density.
Of course this fact is directly related to the disintegration theorem, but will nonetheless be easier for most audiences to digest.

The statement at the end of the proof, $f_{\bd Y^c|\bd Y} \stackrel{\text{set}}{=} \ut f_{\bd\Gamma^c|\bd\Gamma}$, may seem mysterious, but actually provides the density to assign to the set of auxiliary variables within a CoV exercise to match the BJW solution.
First choose a complementary map $\bd g^c(\cdot)$ that yields a full-rank augmented system (a.e.).  
Under the augmented map $\bd g_{\text{aug}}$,
the initial density $\ut f_{\bd\Theta}$ pushes forward to the $p$-dimensional joint density $\ut f_{\bd\Gamma, \bd\Gamma^c} = \ut f_{\bd\Gamma^c | \bd\Gamma} \cdot \ut f_{\bd\Gamma}$.
The BJW solution to the SIP is thus a density obtained through a linear combination of CoVs from $\mathcal Q$ to $\mathcal P$
with $\bd Y_{\text{aug}}  \sim  f_{\bd Y}(\bd y) \cdot  \ut f_{\bd\Gamma^c | \bd\Gamma}(\bd y^c | \bd y)$.

In theory, one could sample from $f^{\text{BJW}}_{\bd\Theta}(\bd\theta)$ using Monte Carlo reasoning akin to Algorithm \ref{alg:Intuitive}. 
One would first generate $\bd y^* \sim f_{\bd Y}$, then use this value to obtain a conditional draw $\bd y^{c*} \sim  \ut f_{\bd\Gamma^c | \bd\Gamma = \bd y^*}$ to form $\bd y_{\text{aug}}^* \defi (\bd y^*, \bd y^{c*})$.
Finally, the solution to 
$\bd y_{\text{aug}}^* - \bd g_{\text{aug}} \big( \bd\theta^* \big)  = \bd 0$
would give a realization $\bd\theta^*$ of the BJW solution.
However, this is hypothetical since in practice one does not have the ability to simulate from the unknown conditional density $\ut f_{\bd\Gamma^c | \bd\Gamma}$.

\section{Other Related Work}  \label{sec:OtherWork}

In order to complete our literature review of SIPs, we need to discuss two more recent works: \cite{Swigon2019} and \cite{Uy2019}.

\cite{Swigon2019} demonstrates the critical role of the Jacobian determinant in two types of parameter estimation problems. 
In the first, the data is given by a nonlinear transformation of the parameters, i.e., $\bd Y = \bd g(\bd \Theta)$, and the authors call this situation (i.e., our SIP) a ``random parameter model''.
Here the answer is simply given by the CoV, similar to our conclusion for $p=q$, and as such depends on the Jacobian determinant.
In the second type of parameter estimation problem, the data is corrupted by error, i.e., $\bd Y = \bd g(\bd \theta) + \bd E$, for some fixed $\bd \theta$, which the authors called a ``random measurement error model'' (and we call a statistical IP). 
In this context the Jacobian determinant appears in the construction of a default prior for $\bd\Theta$.
The authors take both these modeling approaches as worthy of consideration, and while offering some discussion akin to our Section \ref{sec:Inference}, do not tie their results into the work stemming from BBE and BJW.

\cite{Uy2019} restricts attention to the case that $p > q$ and approach SIPs from a different angle: given that the problem is underdetermined, what additional knowledge about $\bd \Theta$ is required to pose and solve SIPs? 
(By contrast, traditional IPs can use ``extra'' information to regularize solutions in overdetermined systems, but here represents necessary missing information for the underdetermined system.)
In short, one can either specify moments of $\bd \Theta$ or its distributional family, similar to methods based on the principle of maximum entropy. 
Interestingly, if additional information is provided on the family of distributions to which $\bd \Theta$ belongs, then the new inverse problem can be solved using standard Bayesian methods (see Section~3.2.2).
For each option, the authors explore two possibilities about the data: (1) pdf information about $\bd Y = \bd g (\bd \Theta)$ is known, or (2) samples from $\bd Y$ are given. 
While mainly focusing on the BBE solutions, \cite{Uy2019} stresses the fundamental underdetermined nature of SIPs (previously noted in \cite{Breidt2011}).
We will take this a little further in Section \ref{sec:Inference} (IV).

\cite{Uy2019} shows that the uniform contour ansatz can lead to arbitrarily bad approximations of the true density, and in some cases cannot recover the true density at all. 
Moreover, the authors specify this additional information in order to find the true distribution of $\bd \Theta$, so that it can be queried in other contexts, and not just to match $\bd g(\bd \Theta)$. 
This goal causes a philosophical breach from \cite{Breidt2011} and \cite{Butler2018a}, in which different data points may lead to different solutions, even within the same problem.





\section{All Solutions to SIPs are Changes-of-Variables}   \label{sec:ALLisCov}

Our main proposition follows from the CoV theorem and contains elements of Proposition \ref{prop:BJWisCoV}.
We give the proof here because it is both constructive and instructive.
\begin{prop} 
Let $f_{\bd\Theta}(\bd\theta)$ be any exact solution to the SIP of (\ref{eq:InvProb}), and suppose the forward map $\bd g$ is as described in Assumption (A1).
Then $f$ is derivable from a CoV (a.e.) for at least one choice of auxiliary variables.
\label{prop:CoV}
\end{prop}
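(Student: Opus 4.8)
The plan is to reduce the underdetermined ($p>q$) case to the already-settled square case of Proposition~\ref{prop:CoVpq} by adjoining $p-q$ auxiliary outputs, exactly in the spirit of the augmented system \eqref{eq:AuxVars}. The statement only asks for \emph{at least one} admissible choice of auxiliary variables, so I need exhibit just a single complementary map $\bd g^c = (g_{q+1},\ldots,g_p)$. The cleanest admissible choice is the canonical completion $g_{q+j}(\bd\theta) \defi \theta_{q+j}$ for $j = 1,\ldots,p-q$, i.e.\ declaring the last $p-q$ coordinates themselves to be the auxiliary observables $\bd Y^c$. (Any other a.e.\ full-rank completion would work equally well and would correspond to a different induced law on the auxiliary variables.)

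With this completion the augmented map $\bd g_{\text{aug}} \defi (\bd g, \bd g^c)$ has Jacobian
\[
   \first{\bd g_{\text{aug}}}{\bd\theta} =
   \begin{bmatrix}
      \first{\bd g}{\bd\theta_{1:q}} & \first{\bd g}{\bd\theta_{(q+1):p}} \\
      \bd 0 & \bd I_{p-q}
   \end{bmatrix},
\]
whose determinant equals $\big| \first{\bd g}{\bd\theta_{1:q}} \big|$, which is nonvanishing on $\mathcal P \setminus \mathcal P_0$ precisely by Assumption~(A1). Thus $\bd g_{\text{aug}}$ is a square map that is locally invertible a.e. I then set $\bd Y_{\text{aug}} \defi \bd g_{\text{aug}}(\bd\Theta)$ and let $f_{\bd Y_{\text{aug}}}$ denote its pushforward through $\bd g_{\text{aug}}$, which is well defined from the given solution $f_{\bd\Theta}$. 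The key observation linking this construction back to the SIP is that, since $f_{\bd\Theta}$ solves \eqref{eq:InvProb}, the first $q$ components satisfy $\bd g(\bd\Theta) \sim f_{\bd Y}$; hence the $\bd y$-marginal of $f_{\bd Y_{\text{aug}}}$ is exactly the prescribed $f_{\bd Y}$, and the factorization $f_{\bd Y_{\text{aug}}} = f_{\bd Y} \cdot f_{\bd Y^c | \bd Y}$ merely records the conditional law that this particular solution induces on the auxiliary outputs.

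The proof then finishes by invoking the square result. By construction $f_{\bd\Theta}$ solves the square SIP for the map $\bd g_{\text{aug}}$ with target $f_{\bd Y_{\text{aug}}}$, so Proposition~\ref{prop:CoVpq} (with $\bd g_{\text{aug}}$ in place of $\bd g$) applies: partitioning $\mathcal P \setminus \mathcal P_0$ into the sets $\mathcal P_1, \ldots, \mathcal P_d$ on which $\bd g_{\text{aug}}$ admits $C^1$ inverses and applying the pushforward/pullback identity \eqref{eq:PushforwardPullback} on each piece expresses $f_{\bd\Theta}$ (a.e.) as a weighted combination of changes-of-variables from the augmented range $\bd g_{\text{aug}}(\mathcal P)$ back to $\mathcal P$. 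This exhibits $f_{\bd\Theta}$ as a CoV for the chosen auxiliary variables, as claimed.

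The analytically delicate points are genuinely two: first, guaranteeing that an admissible, a.e.\ full-rank completion exists at all---this is exactly where the invertibility of the left $q \times q$ block in (A1) is indispensable and is what legitimizes the identity completion above; and second, the bookkeeping when $\bd g_{\text{aug}}$ is many-to-one, where one must carry the mixture weights over $\mathcal P_1, \ldots, \mathcal P_d$ as in Proposition~\ref{prop:InfiniteSolutions} and discard the excluded measure-zero set $\mathcal P_0$. I expect no obstacle beyond these, since the calculation is inherited wholesale from the square case; the substantive content is conceptual rather than computational, namely that the chosen solution $f_{\bd\Theta}$ itself dictates the conditional $f_{\bd Y^c|\bd Y}$ that must be assigned to the auxiliary outputs in order to realize it as a change-of-variables.
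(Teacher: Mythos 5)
Your proposal is correct and follows essentially the same route as the paper's own proof: both adopt the identity-map completion $g_{q+j}(\bd\theta) \defi \theta_{q+j}$, observe that the augmented Jacobian determinant reduces to $\big| \first{\bd g}{\bd\theta_{1:q}} \big|$ (nonvanishing a.e.\ by (A1)), factor the pushforward as $f_{\bd Y_{\text{aug}}} = f_{\bd Y} \cdot f_{\bd Y^c \mid \bd Y}$ with the conditional induced by the given solution, and then reuse the square-case argument of Proposition~\ref{prop:CoVpq} applied to $\bd g_{\text{aug}}$ to express $f_{\bd\Theta}$ (a.e.) as a combination of changes-of-variables from the augmented range. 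The only cosmetic difference is that you invoke Proposition~\ref{prop:CoVpq} as a black box where the paper re-runs its reasoning inline; the substance is identical.
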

\begin{proof}
The $p = q$ case is covered in Proposition \ref{prop:CoVpq}, so we now assume $p > q$.  
After constructing auxiliary variables, the proof follows in much the same fashion.

Under (A1) we have that the component functions of $\bd g$ (i.e., $g_1,\ldots,g_q$) produce a Jacobian with left $q \times q$ block $\big| \first{\bd g}{\bd\theta_{1:q}} \big|$ that is invertible on $\mathcal P_0$.
Now observe that there exist maps $g_{q+1}, \ldots, g_p$ from $\left( \mathcal P \cap \mathbb R^{p-q} \right)$ to $\mathbb R^{p-q}$ such that the corresponding lower-right $(p-q)\times(p-q)$ sub-Jacobian is invertible;
Again, let $\bd g_{\text{aug}}$ denote the full $p$-dimensional function. 
Any choice of these $(p-q)$ functions results in the definition of auxiliary variables $\bd Y^c \defi (Y_{q+1}, \ldots, Y_p)^\top$. 
The simplest augmented system is one that uses the identity maps to define the components of $\bd Y^c$
\small
\begin{align}
  \left[  \begin{array}{l}
  \Theta_1  \\
  \vdots    \\ 
  \Theta_q  \\[1ex]
  \Theta_{q+1}  \\[1ex]
  \vdots    \\
  \Theta_p
  \end{array}  \right] 
  \begin{array}{c}
  \bd g_{\text{aug}}  \\
  \longmapsto  
  \end{array}
  \left[  \begin{array}{lcl}
  Y_1      &  \hspace*{-6pt} =     \hspace*{-6pt}  &  g_1(\bd\Theta)  \\
  \vdots   & &  \\ 
  Y_q      &  \hspace*{-6pt} =     \hspace*{-6pt}  &  g_q(\bd\Theta)  \\[1ex]
  \hline
  Y_{q+1}  &  \hspace*{-6pt} \defi \hspace*{-6pt}  &  g_{q+1}(\bd\Theta) \defi \Theta_{q+1}  \\
  \vdots   & &  \\
  Y_p      &  \hspace*{-6pt} \defi \hspace*{-6pt}  &  g_p(\bd\Theta) \hspace*{11pt} \defi \Theta_p
  \end{array}  \right] 
\end{align}
\normalsize
which produces Jacobian determinant
\small
\begin{align}
  \left| \first{\bd g_{\text{aug}}}{\bd\theta} \right|
    \quad  &=  \quad  \text{det}
  \begin{bmatrix}
    \first{g_1}{\theta_1}      &  \cdots  &  \first{g_1}{\theta_q}  &  
    \first{g_1}{\theta_{q+1}}  &  \cdots  &  \first{g_1}{\theta_p}  \\
    \vdots  &  \vdots  &  \vdots  &  \vdots  &  \vdots  &  \vdots  \\
    \first{g_q}{\theta_1}      &  \cdots  &  \first{g_q}{\theta_q}  &  
    \first{g_q}{\theta_{q+1}}  &  \cdots  &  \first{g_q}{\theta_p}  \\[1ex]
    \hline  
    0  &  \cdots  &  0  &  1  &  \bd 0^\top  &  0  \\
    \vdots  &  \vdots  &  \vdots  &  \bd 0  &  \ddots  &  \bd 0  \\    
    0  &  \cdots  &  0  &  0  &  \bd 0^\top  &  1  \\
  \end{bmatrix}  
    \quad  =  \quad   \left| \first{\bd g}{\bd\theta_{1:q}} \right|  \ .
\end{align}
\normalsize
This augmented system is invertible due to the fact that the upper-left $(q \times q)$ block $\first{\bd g}{\bd\theta_{1:q}}$ is itself invertible.
The pushforward of $f_{\bd\Theta}(\bd\theta)$ is thus possible by the CoV theorem.
Furthermore, because $f_{\bd\Theta}(\bd\theta)$ is assumed to solve the SIP, its pushforward is a $p$-dimensional distribution $f_{\bd Y_{\text{aug}}}$ whose marginal is $f_{\bd Y}(\bd y)$ for the first $q$ entries.
In fact, 
$f_{\bd Y_{\text{aug}}} = f_{\bd Y}\cdot f_{\bd Y^c | \bd Y}$ with 
\begin{align*}
   f_{\bd Y^c | \bd Y}(\bd y^c | \bd y)  \defi  f_{\bd\Theta_{(q+1):p} | \bd\Theta_{1:q}}(\bd y^c | \bd y)
\end{align*}
under the identity maps.

Going the other direction, the reasoning in the proof of Proposition \ref{prop:CoVpq} can again be used. 
Specifically, Assumption (A1) guarantees local inverses $\bd h_1, \ldots, \bd h_d$ of $\bd g$ defined on the appropriate sets, and we have that the posited solution can be written as a particular $d$-combination of CoV solutions from the augmented range to $\mathcal P \setminus \mathcal P_0$ except for on a set of measure zero.
%
%
The presumed solution was then actually a CoV using a particular choice of auxiliary variables $\bd Y^c$.

Thus we have shown that no matter how one has obtained an exact solution to the SIP, it could have actually been derived from CoV solutions for any choice of auxiliary variables to make an augmented forward map $\bd g_{\text{aug}}$ locally invertible.
\end{proof}

Some readers will recognize the augmented system within the proof as that used within standard constructive proof of the Implicit Function Theorem (given the Inverse Function Theorem); see e.g. \cite{Lee2013} pg. 661.
This auxiliary variable strategy is exactly what is taught in a first course on mathematical statistics.
For example, to find the distribution of the product $Y$ of two random variables $\Theta_1$ and $\Theta_2$, one may 
augment the one-dimensional system $Y \defi \Theta_1 \Theta_2$ with the auxiliary variable $Y^c \defi \Theta_2$ (for example), 
obtain $f_{Y,Y^c}$ from the two-dimensional CoV, and integrate the joint density to get $f_Y$.

The proposition above shows that the answer will be completely determined (a.e.) from the choice of $p-q$ dimensional density $f_{\bd Y^c | \bd Y}$.
Intuitive solutions (\ref{eq:Intuitive2}) can be viewed as the simplest cases that use $(p-q)$ identity maps (as in the proof above), together with $f_{\bd Y^c | \bd Y} \equiv f_{\bd\Theta_{(q+1):p}}$ chosen by the analyst.

In order to sample from a CoV density $f^{\text{CoV}}_{\bd\Theta}(\bd\theta)$, it may appear that the Jacobian determinant $\big| \first{\bd g_{\text{aug}}}{\bd\theta} \big|$ (or at the very least $\big| \first{\bd g}{\bd\theta_{1:q}} \big|$) is readily available \citep{Swigon2019}.
However, this is not true.
Instead of using the explicit density 
to obtain random draws, one could instead use a simple Monte Carlo routine similar to Algorithm \ref{alg:Intuitive}. 
One would first generate $\bd y^* \sim f_{\bd Y}$, then use this value to obtain a conditional draw $\bd y^{c*} \sim f_{\bd Y^c | \bd Y = \bd y^*}$.
The vector $\bd y_{\text{aug}}^* \defi (\bd y^*, \bd y^{c*})$ is a random draw from $f_{\bd Y_{\text{aug}}}$.
Finally, the solution to 
$\bd y_{\text{aug}}^* - \bd g_{\text{aug}} \big( \bd\theta^* \big)  = \bd 0$
would give a realization $\bd\theta^*$ of the CoV solution.

\subsection{Example: Linear Transformation of a Multivariate Gaussian Vector}   \label{sec:LinearMVN}

Let us return to the example of Section \ref{sec:ExampleLinearBJW}.
Again, let $\bd g(\bd\theta) \defi \bd A \bd\theta$ so that $\bd Y = \bd A \bd\Theta$; the dimension of $\bd A$ is $q \times p$, and this matrix is assumed to have full row-rank.
Suppose that the observation distribution is $\bd Y \sim  N_q(\bd\mu_y, \bd\Sigma_y)$.

When $p=q$, $\bd A$ is invertible so the SIP is solved by $\bd\Theta = \bd A^{-1} \bd Y$, and this has density
\begin{align}
f^{\text{CoV}}_{\bd\Theta}(\bd\theta)  =  
N_p \Big( \bd A^{-1}\bd\mu \ , \ \bd A^{-1} \bd\Sigma \bd A^{-\top}\Big)  \label{eq:LinearMVN}  \ .
\end{align}
When $p>q$, $\bd A$ can be augmented into an invertible matrix in an infinite number of ways.
Suppose that $\bd A^\perp$ is a $(p-q)\times p$ representation of the null space of row($\bd A$).
The augmented matrix $\bd A_{\text{aug}} \defi  \begin{bmatrix}  \bd A \\ \bd A^\perp  \end{bmatrix}$ is then invertible.
Furthermore $\bd A^\perp$ defines $p-q$ auxiliary variables $\bd Y^c$.
If the joint distribution of $(\bd Y, \bd Y^c)$ is $N_p(\bd\mu_+, \bd\Sigma_+)$ (making $\bd\mu_y$ the first $q$ entries of $\bd\mu_+$ and $\bd\Sigma_y$ the upper left $q \times q$ block of $\bd\Sigma_+$), then the CoV solution to the SIP is
\begin{align*}
  f^{\text{CoV}}_{\bd\Theta}(\bd\theta)  &=  N_p \Big( \bd A_{\text{aug}}^{-1}\bd\mu_+ \ ; \ \bd A_{\text{aug}}^{-1} \bd\Sigma_+ \bd A_{\text{aug}}^{-1}\Big)  \ .
\end{align*}

The density above begs the question: how does one choose a density for \emph{unobservable} quantities in order to augment and close the system?
We return to this fundamental question in the next section.
The issue of underdeterminedness was observed back in \cite{Breidt2011} (pg. 1839)
for the bivariate Gaussian case ($p=2, q=1$) by considering moment conditions (instead of auxiliary variables), but this did not curtail the investigation or application of SIPs.

\section{Changes-of-Variables and Inference}   \label{sec:Inference}

In the previous section we showed that any solution to an SIP (\ref{eq:InvProb}) can be viewed as a CoV.
The solution to the IP given in (\ref{eq:InvProbClassicNoise}) is typically a matter of statistical \emph{inference}.
What is the connection between these two solutions?
In this section we explore this question and more broadly examine the interface of CoVs and inference. 
While multiple works provide some discussion to distinguish SIP solutions from more traditional statistical inference---Section~2 in~\cite{Breidt2011}; Section~7 in~\cite{Butler2018a}; Section~4.3 in~\cite{Butler2014}; Remark~2.3 in~\cite{Butler2020b}; and throughout~\cite{Uy2019}, especially Sections~1 and~4.2---there is still much to address, and we do so here.

In any inferential context, a change of the likelihood function $\propto f_{\bd Y}(\bd y ; \bd\theta)$ with respect to $\bd y$ is obviously a CoV.
In the Bayesian or Fiducial paradigms, a change of the target distribution $f(\bd\theta | \bd y)$ with respect to $\bd\theta$ is a reparameterization that is also quite conspicuously a CoV. 
A less obvious connection between inference and CoVs is the very concept of Bayesian inference itself.
If $f_{\bd\Theta}^{\text{pri}}(\bd\theta)$ is a proper prior distribution, and $f_{\bd\Theta}^{\text{post}}(\bd\theta) \propto f(\bd y | \bd\theta) \cdot  f_{\bd\Theta}^{\text{pri}}(\bd\theta)$ denotes the posterior distribution, there is some ``forward map'' $\bd g_T$ that pushes prior forward to posterior.
In fact this is the motivation behind \emph{transport maps} where, in an inference setting, the goal is to learn exactly that function $\bd g_T$. This CoV is however entirely distinct from the SIP solutions discussed here: with transport maps, the function $\bd g_T$ maps \emph{from parameter to parameter} space, not from parameter to observable space
\citep{Marzouk2016, Baptista2021}. 

The solutions to (\ref{eq:InvProbClassicNoise}) and (\ref{eq:InvProb}) can sometimes agree, but as we will argue, this does not mean that statistical IPs and SIPs are analogous notions.
We will first give examples where Bayesian posterior and Generalized Fiducial (GF) distributions are identical to a CoV before giving five (interrelated) ways in which SIPs are fundamentally different from inference.
We close by providing two illustrative examples.

\textbf{Bayesian and GF solutions can appear to be CoVs when $p = q$.}
Consider again the linear map to a Gaussian observable (Section \ref{sec:LinearMVN}) having known covariance $\bd\Sigma_y$.
When $p=q$ and an improper prior is chosen $f_{\bd\Theta}^{\text{pri}}(\bd\theta) \propto 1$ for $\bd\Theta$, the posterior is
$f_{\bd\Theta}^{\text{post}}(\bd\theta)  \propto  f_{\bd Y}(\bd y ; \bd A \bd\theta) \cdot 1 $
%
%
which agrees with the CoV solution (\ref{eq:LinearMVN}) after a simple rearrangement.
A similar result is also obtained in the appendix of \cite{Swigon2019}.
The density above is also the uncertainty distribution in the GF paradigm \citep[Ex. 3]{Hannig2016}.
In fact, when $p=q$ and $\bd g$ is an invertible nonlinear map, the GF solution is $f_{\bd\Theta}^{\text{GF}}(\bd\theta)  \propto  f_{\bd Y}(\bd y ; \bd g(\bd\theta)) \big| \first{\bd g}{\bd\theta} \big|$
%
%
\citep[Eqns 3,4]{Hannig2016} 
which is the CoV solution (\ref{eq:CoVsolution}).
The agreement in these special cases is superficial, as a closer look in the first point below will show a sharp contrast in the interpretation of the terms involved.

\textbf{(I) In SIPs, observables are \emph{populations} with given parameters.}
It is in the application of the CoV methodology that one can clearly see the difference between SIPs and any inferential framework based upon a likelihood function.
In an inferential context, after the data is observed, $\bd Y$ becomes $\bd y_{\text{obs}}$ and the data pdf $f_{\bd Y}(\bd y ; \bd\theta)$ will yield the likelihood $L(\bd\theta ; \bd y_{\text{obs}})$.
The likelihood is a function of $\bd\theta$ and indexed by the data $\bd y_{\text{obs}}$. In the Gaussian examples above, the \emph{unknown} mean of $\bd Y$ features the parameters: $\bd A \bd\theta$ or $\bd g(\bd\theta)$.
On the other hand, the CoV solution is based upon the density for the observables $f_{\bd Y}(\bd y)$ which is actually $f_{\bd Y}(\bd y ; \bd y_{\text{obs}})$ where the observed ``data'' actually operate as \emph{known} population parameters!
This density has both $\bd y$ and $\bd y_{\text{obs}}$ as arguments, and the solution will have the $\bd y$ argument replaced with the forward map: $f_{\bd Y}(\bd y = \bd g(\bd\theta) ; \ \bd y_{\text{obs}})$.
The agreement of the Bayesian/GF and CoV solutions for the $p = q$ cases above is due to the symmetry of the Gaussian density with respect to its $\bd y$ and mean arguments.
Moreover, SIPs require all such observable population parameters to be given: as seen above, SIPs involving Gaussian observables require the covariance $\bd\Sigma_y$ to be given.
This is quite different from statistical IPs where quite often covariances are parameters to estimate.

SIP solutions are not inference because they essentially treat everything on a ``population'' level to begin with: the observables \emph{are not a realization} from an unknown population density, they \emph{are the population} with a known density.  
In the case of inference, the data model $f_{\bd Y}(\bd y; \bd g(\bd\theta))$ features the forward map and population parameters $\bd\theta$ to infer.
The likelihood function gives desirable large-sample properties such as the consistency and asymptotic normality of the maximum likelihood estimator \citep{Reid2010}.
In the case of SIP solutions, the density for the observables $f_{\bd Y}(\bd y ; \bd y_{\text{obs}})$ is a population-level description with known parameters and depends upon neither $\bd g$ nor $\bd\theta$. 
As such, $f_{\bd Y}(\bd y)$ does not feature any $\bd g(\bd \theta)$ in it, so can neither be interpreted as a conditional density nor a likelihood given the map $\bd g(\bd \theta)$.  
Hence there is no likelihood function or the corresponding theory.

In fact, the phrases ``increasing sample size'' and ``collecting more data'' are not really even coherent phrases in the world of SIPs; adding observables is simply changing the population to invert.
Despite the fact that the BJW solution was first presented as ``consistent Bayesian inference'', it too suffers from from this changing-populations issue because none of the $f_{\bd Y}$'s for the new data explicitly contain any $\bd g(\bd \theta)$ terms. 
In contrast, within any statistical inference procedure, each new piece of data will contribute knowledge of $\bd \theta$ via $\bd g(\bd \theta)$ through a joint likelihood function.

\textbf{(II) SIPs require $p \geq q$.}
Perhaps the most obvious sign that an SIP is different from a statistical IP is the hard stipulation that $p \geq q$, i.e., more parameters than observables or data.
On the other hand, in any likelihood-based inference setting (Bayesian, GF, frequentist, etc.), an analyst takes great care to avoid unregularized or saturated models and the inherent risk of ``over-fitting''.
This is tied to the problem of \emph{prediction}.

Along with obtaining an uncertainty distribution for $\bd\Theta$, another common goal is prediction: inferring a new value given those values already observed.
In the Bayesian/GF frameworks, the (posterior) predictive density is
$f_{\bd Y^* | \bd Y} = \int f_{\bd Y^* | \bd\Theta, \bd Y} \ f_{\bd\Theta | \bd Y} d\bd\theta$.
This density is trustworthy when the models $f_{\bd Y | \bd\Theta}$ and $f_{\bd Y^* | \bd\Theta, \bd Y}$ are adequate.
Adequacy can be assessed by analyzing residuals and out-of-sample predictions.
When $p \geq q$, a non-Bayesian introduces regularization or smoothness penalties (such as in LASSO) to better constrain the problem, and a Bayesian codifies complexity penalization into the prior specification.
In SIPs there is no way to accommodate such regularization: all solutions embody over-fitting.
Prediction, however, is still possible within the CoV framework, through some new forward map $\bd g^*$.
After obtaining $\bd\Theta \sim f_{\bd\Theta}^{\text{CoV}}$, a predictive density is simply $\bd g^*(\bd\Theta)$.
However, as $\bd Y \rightarrow \bd\Theta \rightarrow \bd g^*(\bd\Theta)$ is a sequence of population-level reparameterizations (I), there are no falsifiable models that can be checked, and hence no way to defensibly justify prediction.

The SIP framework has no extension into the realm of $p < q$; instead, one has to force a problem into becoming an SIP and out of the realm of inference.
Approaches to ensure that $p \geq q$ include the following.
First, the analyst can pick a new subset of $p$ or fewer observables, even though each subset will lead to a different answer. 
This is the approach followed in~\cite{Butler2015}, in which 2 of 12 possible water levels are selected, and in~\cite{Butler2020c}, in which 2 of 20 possible eigenmodes of a drum are selected. 
A related approach is to combine the $q$ original observables into $p$ or fewer, such as by taking averages and pooling variances.
Per the LUQ framework developed by~\cite{Mattis2022}, large quantities of time series data are transformed into samples of a small number of QoIs (via PCA-based feature extraction).
Alternatively, the analyst can add parameters; the original forward map is effectively augmented to make $p \geq q$.
One way to do this is to add a new parameter for each new observable, as suggested in the stochastic map framework~\cite{Butler2020b}. 
In this case, the number of unknowns is inflated from $p \mapsto p+q > q$.

\textbf{(III) SIPs accommodate replicates in an awkward fashion.}
This is very related to (II) above but can be an issue regardless of $p$'s relation to $q$.
Suppose for example that a computer model has $p$ parameters ($\bd\theta$) as inputs and $q$ outputs. 
If there were $n$ vectors of measurements that can be related to the computer model output, one could not immediately treat the estimation of $\bd\theta$ as an SIP.
Either the data vectors would have to be combined into a single $q$-vector (to be treated like a population, I), or following \cite{Butler2020b}, $n$ additional $q$-vectors of unknowns $\bd\epsilon_i$ would be added ($p \mapsto p + nq$).

This awkward, forced adjustment contrasts the world of inference where replicate measurements are a virtual cornerstone of statistical design and analysis of experiments.

\begin{figure}[hbt!]   \begin{center}
\includegraphics[width=0.8\textwidth]{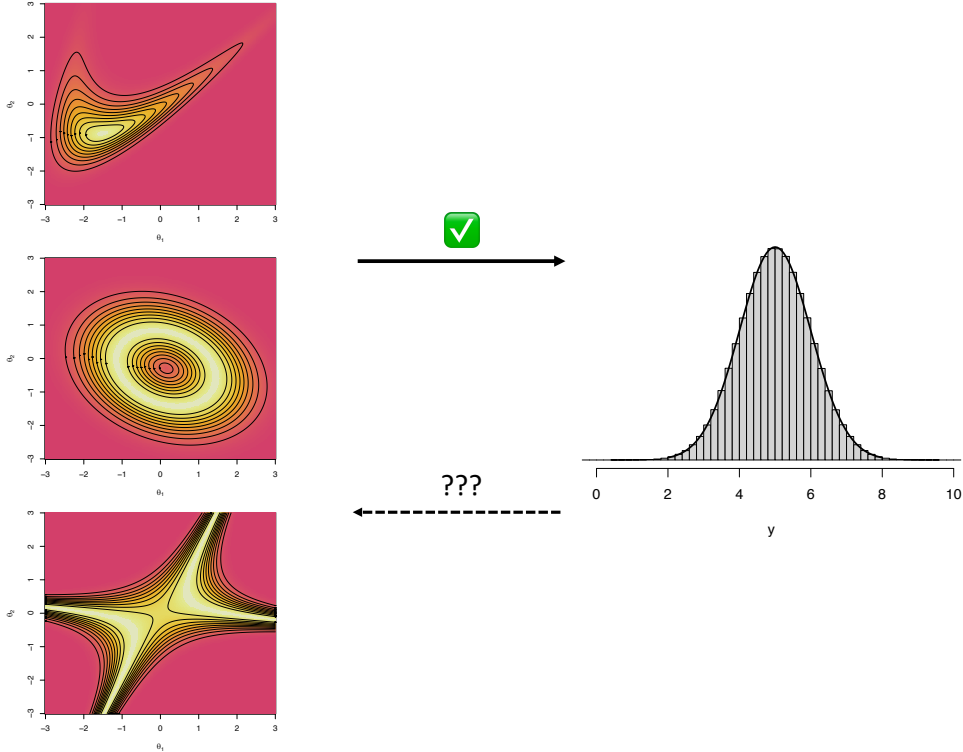} 
\caption{
A demonstration of the fundamental underdetermined nature of the SIP for $p=2$ and $q=1$. 
The only way to recover the true $f_{\bd\Theta}$ distributions (left) from given only $f_{Y}$ (right) is to use unknown (and unknowable) auxiliary variables whose joint distribution could be arbitrarily complex.
}
\label{fig:ForwardInverse}
\end{center}   \end{figure}

\textbf{(IV) The underdeterminedness of an SIPs may preclude the recovery of any ``true'' solution.}
Any solution to (\ref{eq:InvProb}) solves the SIP.
If one is faced with a problem where there is supposed to be a single, true distribution to be recovered, then one will almost certainly not be able to recover it for the following reasons.
As we showed in this work, an infinite number of solutions will exist when $p>q$, as there are infinitely many choices for auxiliary variables. 
Any true distribution comes from joint density of hypothetical observables with the given observables, and this distribution could be very complicated.
Figure \ref{fig:ForwardInverse} shows three potential ``true'' solutions that come from complicated joint distributions for augmented observables $\bd Y_{\text{aug}}$.
Is one choice of auxiliary variables and densities more valid or preferable than others?  In general, no.

Even without direct mention of auxiliary variables, there is no \emph{a priori} reason to prefer BBE, BJW, or any of the possible intuitive solutions.
This is quite different from Bayesian inference where the choice of prior distribution becomes irrelevant as more data is collected.
Moreover, even in the case that $p=q$, there is the risk of infinitely many solutions if the forward map is not uniquely invertible on a given domain, as shown in Section~\ref{ssec:two-to-one} of the appendix.

\textbf{(V) In practice, SIP solutions may not exist.}
Throughout this work we have taken it as given that the SIP solution exists, i.e., that the range $\bd g(\mathcal P)$ contains $\mathcal Q$, and that the system of equations has a solution. 
Underdetermined systems can, in reality, be ``inconsistent'' in that they have none.
Therefore, the existence of a solution needs to be checked and not assumed.
\cite{Moore1977}, for example, gives a test to check for solutions to nonlinear equations within given bounds.

\subsection{Example: CoV Versus Simple Linear Regression}

Suppose that two measurements of a response variable are given $\bd y_{\text{obs}} = (-1,1)^\top$ together with an uncertainty matrix of $\Sigma_y = \sigma^2 \bd I_2$.
An analyst wants to relate these to a predictor variable $x$, having values of $(-1,1)$, through a forward map which is linear in its parameters: $g_x(\bd\theta) = \theta_1 + \theta_2 x  =  [1 \ x] \bd\theta$.

To treat this as an SIP, the analyst might take $\bd Y \sim N(\bd y_{\text{obs}}, \bd\Sigma_y)$, or that the given measurement values were actually population parameters of a Gaussian distribution (I).
The linear forward map is indexed by $x$, but for the problem at hand reduces to 
$\bd Y = \bd X \bd\Theta$ with $\bd X = \begin{bmatrix}  1 & -1  \\  1 & 1  \end{bmatrix}$.
The SIP is solved by 
$f^{\text{CoV}}_{\bd\Theta}(\bd\theta)  =  N_2 \big( \bd X^{-1}\bd y_{\text{obs}}, \ \bd X^{-1} \bd\Sigma_y \bd X^{-\top} \big)$ 
which simplifies to
\begin{align*}
    %
    %
    \bd\Theta
    &\sim  N_2 \left( 
    \begin{bmatrix}  0 \\ 1  \end{bmatrix} , \
    \frac{\sigma^2}{2} \begin{bmatrix}  1 & 0  \\ 0 & 1  \end{bmatrix} \right)  
\end{align*}
When interpolating and extrapolating at some new covariate value $x^*$, the ``predictive'' distribution is a univariate Gaussian derived from a second CoV (II): 
$$ [1 \ x^*] \bd\Theta  \sim  N \left( x^*, \frac{\sigma^2}{2}(1+ x^{*2}) \right) \ .$$
The analyst is thus able to use two datapoints to get the distribution on $\bd\Theta$ and predict (with quantified uncertainty) at any $x^*$ without any checks on model assumptions (because there is no falsifiable model being used!).
If more measurements were obtained for new $x$ values (resulting in $n$ total), the analyst would have to modify their approach to ensure that the number of parameters was greater.
This would be done by 
1] reducing the $n$ measurements to $q = 1$ or 2 observables;
2] expanding the $\bd\Theta$ vector and making the rows of the matrix $\bd X$ correspond to a higher-order polynomial in $x$, ensuring that a new forward map will overfit; or 
3] augmenting the original forward map to include $n$ new parameters $\epsilon_1, \ldots, \epsilon_n$, as per \cite{Butler2018a}.

A statistical analyst would model any number of measurements as $(Y_i | \bd\theta) = \theta_1 + \theta_2 x_i + E_i$ ($i = 1,\ldots, n$), \emph{conditional upon} the covariate values $x_i$ and unknown parameters $\bd\theta$; a marginal, unconditional distribution would \emph{not} be specified for $\bd Y$ as in an SIP.
The right side of the model equation above is a random variable (uppercase $Y_i$) due to the fact that the $E_i$ term (and only this term) on the left side has a distribution, say $N(0, \sigma^2)$.
Each given measurement is a \emph{realization} of a random variable (lowercase $y_i$).
All these data values will be present in the likelihood function, but the argument is $\bd\theta$, not $\bd y$.
A Bayesian might specify a flat prior for $\bd\Theta$ and possibly even take $\sigma^2$ as given to complete the analysis.
In this case it is true that if $n = 2$, the posterior would be the same as the CoV solution, but no reasonable Bayesian would be comfortable with this answer, especially if prediction were required.
Additional measurements beyond $n = 2$ pose no philosophical problem for the statistician, and indeed, the only dimensional consideration on the statistician's mind is ensuring $n >> p = 2$.

\subsection{Example: Calibration of Nuclear Reaction Code}

Here we give an overview of a real type of analysis that can appear to be either a stochastic or statistical IP and is thus useful for pedagogical purposes.

Consider a nuclear reaction model 
in which a collection of ingoing actinide isotopes are subjected to a fission and/or fusion environment controlled by physical fluence parameters and a set of nuclear cross-sections.
The conversion of nuclides within the reactions is governed by a set of differential equations, and the code output is a vector of the resulting actinides at the end of the reaction series.
The total number of fissions is also tabulated by the code.   
In order to compare the outputs to real measurements, per-fission ratios are formed for each resultant actinide (= number of atoms / total fissions, so that scaling factors cancel).
The goal is to estimate the physics parameters (and in the case of the true forensics scenario, ratios of ingoing isotopes) using the observed per-fission quantities.
\begin{figure}[h!t]   \begin{center}
\vspace*{15pt}
\includegraphics[width=\textwidth]{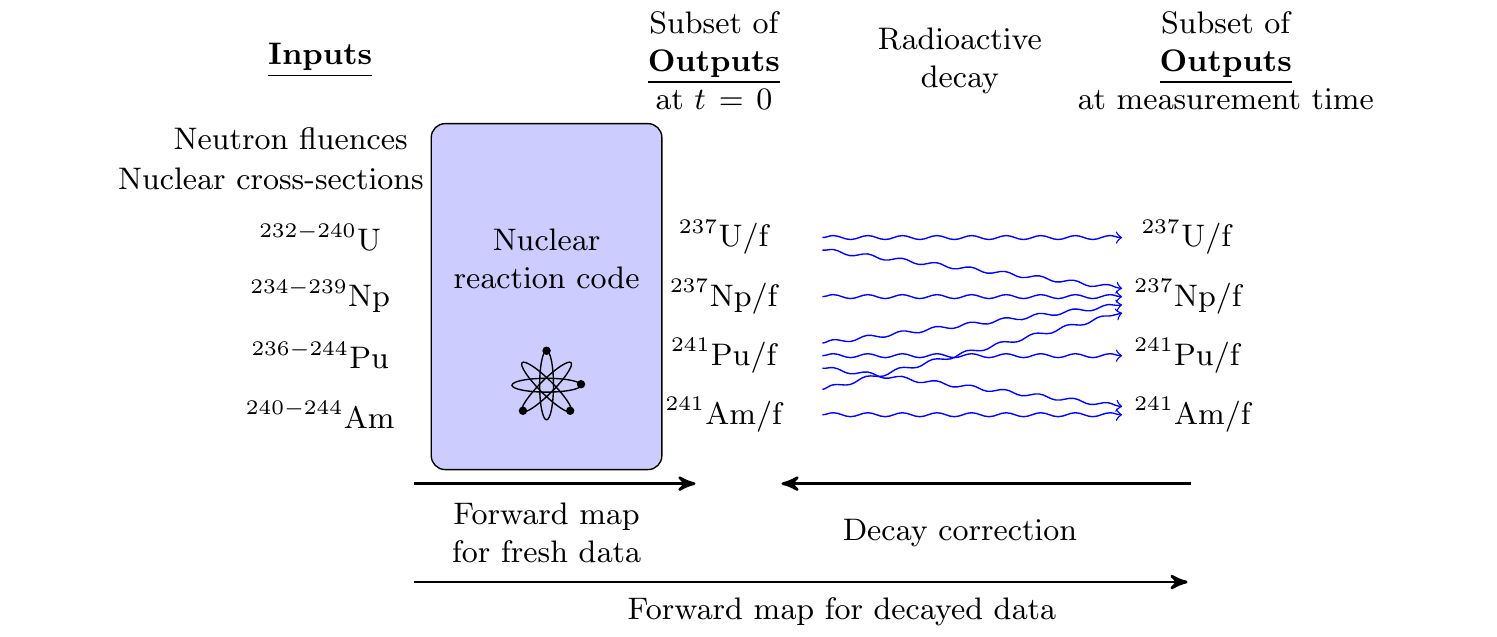} \\
\vspace*{15pt}
\caption{Schematic illustrating the forward problem for a simple scenario of nuclear reactions followed by radioactive decay.}
\label{fig:radio}
\end{center}   \end{figure}

There are multiple actinide per-fission measurements with uncertainties, but these correspond to radiochemical analysis performed some time after the nuclear event.
To account for this time differential, either the Bateman equations (describing radioactive decay) must augment the burn code, or the samples must be decay-corrected back to the end-of-event time ($t=0$) and then used for the calibration.
That is, the forward map for this analysis can either be thought of as the composition of two maps between physics parameters and observables 
$\bd\theta  \stackrel{\text{code}}{\longrightarrow}  \bd Y_0 \stackrel{\text{decay}}{\longrightarrow} \bd Y_t$, 
or solely the code.
Figure \ref{fig:radio} shows this graphically for $\bd Y^\top = ({}^{237}$U$, {}^{237}$Np$, {}^{241}$Pu$, {}^{241}$Am).

Inverting the physics of radioactive decay for a measurement vector is unequivocally a CoV/SIP.
It is therefore tempting to consider the entire calibration process as SIP.
It is especially tempting to do so when, as in many historic radiochemical reports, the original collection of independent decay-corrected samples $\bd y_{0,1}, \ldots, \bd y_{0,n}$ has been reduced to a single vector with uncertainty.   

The most immediate reasons for not treating this nuclear calibration scenario as an SIP are practical.
First, there are often fewer parameters than observables. 
This is almost always true when ingoing isotopics are known, and the goal is to estimate a few fluence parameters.
Treating known masses as unknown to ensure $p > q$ is not appealing.
Second, an SIP solution does not typically exist because the code cannot simultaneously fit all the responses.
After adding discrepancy terms, one is still faced with the possibility of infinitely many solutions since it is not known where the augmented map is one-to-one.

The fundamental reason for not treating this IP as an SIP comes from thinking about what the data represent.
The radiochemical measurements are samples from a population of such quantities, and any reduction to summary values---such as in the historic reports---does not change this fact.
There are unknown population parameters $\bd\theta$, 
and any collection of samples could have been generated by a single $\bd\theta$.
Within the SIP framework, nature's distribution of $\bd\Theta$ would have to change between collections of samples in order to produce their respective variability.
Because this is not how the data is generated, an SIP is not appropriate to infer $\bd\theta$.

\section{Conclusion}   \label{sec:Conclusion}

This paper explored so-called ``stochastic'' inverse problems (SIPs) in great depth.
For the majority of the paper these problems were taken at face value and various solutions explored.
First we provided intuitive solutions derived from a change-of-variables (CoV) wherein the user explicitly controls $p-q$ degrees of freedom. 
The two existing types of solutions in the literature (BBE and BJW) were then shown to be derivable from CoVs. 
We then showed that any solution to an SIP must be directly related to a CoV.
After not questioning the SIP framework, we then gave a lengthy discussion wherein we pointed out a number of fundamental issues inherent to SIPs.
This work thus demonstrates that anyone wanting to treat an estimation or prediction problem as an SIP (as opposed to one of statistical inference) must answer the following questions:
\begin{itemize}
    \item Was the data generated in a way consistent with the SIP framework? How will any possible replicates be treated?
    \item Does the stipulation $p\geq q$ make sense for this problem?
    \item Does an SIP solution exist?
    \item If the answer to the point above is ``yes,'' then: Given that an infinite number of SIP solutions are possible, why is one preferable to any other? On the other hand, given that the SIP solution is almost certainly not the true distribution, will this be problematic for future tasks, e.g., testing, filtering, smoothing, and prediction?
\end{itemize}

\appendix

\section{Example: SIP Involving a Two-to-One Map}  \label{ssec:two-to-one}

As a simple demonstration of the method used in the proof of the Proposition \ref{prop:InfiniteSolutions}, consider the function $g(\theta) = \theta^2$ and a random variable $Y$ on $\mathcal Q = (0,1)$.
If the domain is chosen to be $\mathcal P \defi (-1,1)$, then we can take $\mathcal P_1 \defi (-1,0)$ and $\mathcal P_2 \defi (0,1)$.
The pullback of $f_Y$ through $g_1^{-1}(y) \defi -\sqrt{y}$ and $g_2^{-1}(y) \defi \sqrt{y}$ yields densities $-2\theta f_Y(\theta^2) \mathbbm 1\{\theta \in \mathcal P_1 \}$ and $2\theta f_Y(\theta^2) \mathbbm 1\{\theta \in \mathcal P_2 \}$.
Let $0 < w < 1$ and consider the continuous mixture of the two pullpack densities: 
\begin{align*}
   f^{\text{CoV}}_{\Theta,w}(\theta)  &\defi  2|\theta| f_Y(\theta^2) \Big( w \mathbbm 1\{\theta \in \mathcal P_1 \} + (1-w) \mathbbm 1\{\theta \in \mathcal P_2 \} \Big)  \ .
\end{align*}
By the CoV Theorem, the distribution of $\Gamma \defi \Theta^2$ is 
\begin{align*}
   f_{\Gamma}(\gamma)  &=  
   f^{\text{CoV}}_{\Theta,w} \big( g_1^{-1}(\gamma) \big)  \Big| \first{g_1^{-1}}{\gamma} \Big|  +  
   f^{\text{CoV}}_{\Theta,w} \big( g_2^{-1}(\gamma) \big)  \Big| \first{g_2^{-1}}{\gamma} \Big|  \\
   &=  \frac{1}{2 \sqrt{\gamma}} 
   \left(  f^{\text{CoV}}_{\Theta,w} \big( -\sqrt{\gamma} \big)  +  
   f^{\text{CoV}}_{\Theta,w} \big(  \sqrt{\gamma} \big)  \right)  \\
   &=  f_Y(\gamma)
   \Big( w \mathbbm 1\{-\sqrt{\gamma} \in \mathcal P_1 \} + (1-w) \mathbbm 1\{-\sqrt{\gamma} \in \mathcal P_2 \} \Big)  \\
   & \ \ + f_Y(\gamma)
   \Big( w \mathbbm 1\{\sqrt{\gamma} \in \mathcal P_1 \} + (1-w) \mathbbm 1\{\sqrt{\gamma} \in \mathcal P_2 \} \Big)  \\
   &=  f_Y(\gamma)
   \Big( w \mathbbm 1\{-\sqrt{\gamma} \in \mathcal P_1 \} + (1-w) \mathbbm 1\{\sqrt{\gamma} \in \mathcal P_2 \} \Big)  
   \quad =  f_Y(\gamma)  \ .
\end{align*}
To get a glimpse into the more general case, if the domain had been chosen to be $\mathcal P \defi (-\epsilon,1)$ for $0 < \epsilon < 1$, the domain sets would be $\mathcal P_1 \defi (-\epsilon, 0)$ and $\mathcal P_2 \defi (0, \epsilon)$, but now $\mathcal P_3$ would be the set $(\epsilon, 1)$ where the function is one-to-one.
The infinite set of solutions would take the form
\begin{align*}
   f^{\text{CoV}}_{\Theta,w}(\theta)  &\stackrel{\text{def}}{\propto}  2|\theta| f_Y(\theta^2) \cdot
   \left\{ 
   \begin{array}{cc}
   w \mathbbm 1\{\theta < 0 \} + (1-w) \mathbbm 1\{\theta > 0 \}   &  -\epsilon < \theta < \epsilon   \\
   1   &  \phantom{-} \epsilon < \theta < 1
   \end{array}  \right.  \ .
\end{align*}
Using $Y \sim $ $Unif(0,1)$ with $\epsilon = 0.5$, two solutions are shown in \ref{fig:CoV1}.  
\begin{figure}[h!t]   \begin{center}
\includegraphics[width=0.5\textwidth]{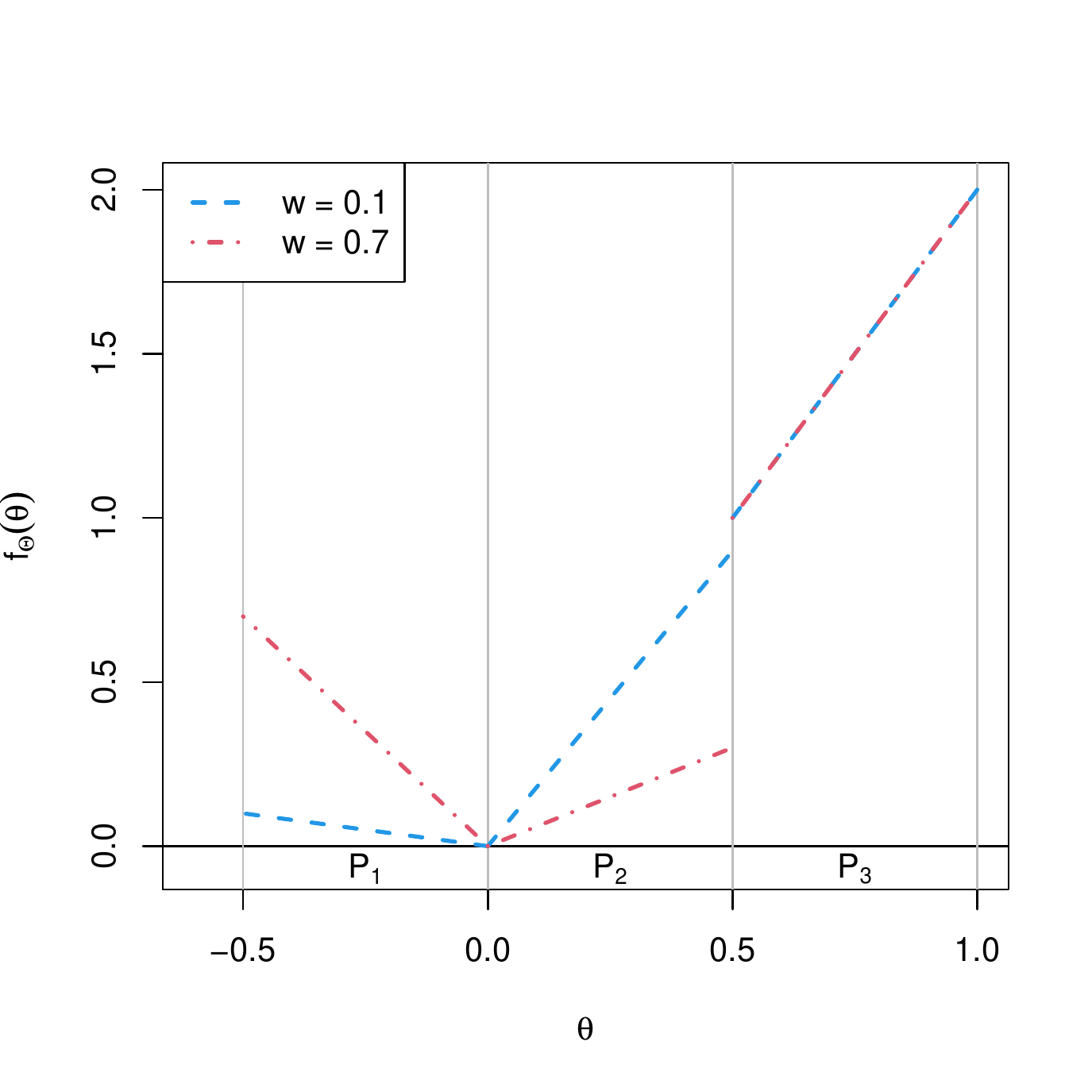} 
\caption{
Two of the infinitely many solutions when $Y \sim $Unif$(0,1)$ and $g(\theta) = \theta^2$ on $(-0.5, 1)$ determined by the choice of $w$.
}
\label{fig:CoV1}
\end{center}   \end{figure}


\section{Additional Derivations}

\subsection{BJW Solution: Linear Transformation of a Multivariate Gaussian Vector}  \label{sec:App1}

Here we verify the claim in Section~\ref{sec:ExampleLinearBJW} that the proposed Gaussian distribution is indeed the BJW solution under the linear map $\bd g(\bd\theta) = \bd A \bd\theta$.

\begin{lemma}
For $\bd\Theta \sim f^{\text{BJW}} \equiv N_p(\wtilde{\bd\mu}, \wtilde{\bd\Sigma})$ with moments defined by \ref{eq:muti}, \ref{eq:sigti}, $\bd A \bd\Theta \sim N_q(\bd\mu_y, \bd\Sigma_y)$, confirming that the proposed BJW solution indeed solves the SIP (\ref{eq:InvProb}).   \label{lem:BJW}
\end{lemma}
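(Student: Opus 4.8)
The plan is to exploit the fact that a linear image of a Gaussian is Gaussian: since $\bd A\bd\Theta$ is an affine transformation of $\bd\Theta \sim N_p(\wtilde{\bd\mu}, \wtilde{\bd\Sigma})$, it is automatically $N_q(\bd A\wtilde{\bd\mu},\, \bd A\wtilde{\bd\Sigma}\bd A^\top)$, so the entire lemma reduces to verifying the two moment identities $\bd A\wtilde{\bd\Sigma}\bd A^\top = \bd\Sigma_y$ and $\bd A\wtilde{\bd\mu} = \bd\mu_y$. Throughout I would write $\bd S \defi \bd A\bd\Sigma_\theta\bd A^\top$ for the (invertible, $q\times q$) pushforward covariance of the initial density, and take care to invert only $q\times q$ objects such as $\bd S$ and $\bd\Sigma_y$, never the non-square matrix $\bd A$ itself.

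For the covariance identity I would start from the precision form $\wtilde{\bd\Sigma}^{-1} = \bd\Sigma_\theta^{-1} + \bd A^\top(\bd\Sigma_y^{-1} - \bd S^{-1})\bd A$ obtained by grouping the three terms of (\ref{eq:sigti}). Applying the Woodbury identity with inner matrix $\bd C \defi \bd\Sigma_y^{-1} - \bd S^{-1}$ recovers precisely the second, factored form of $\wtilde{\bd\Sigma}$ displayed just after (\ref{eq:sigti}). Conjugating that form by $\bd A$ and setting $\bd D \defi (\bd\Sigma_y^{-1} - \bd S^{-1})^{-1}$ reduces the claim to the matrix identity $\bd S - \bd S(\bd S + \bd D)^{-1}\bd S = (\bd S^{-1} + \bd D^{-1})^{-1}$, which is a second instance of the inversion lemma. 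Since $\bd S^{-1} + \bd D^{-1} = \bd S^{-1} + (\bd\Sigma_y^{-1} - \bd S^{-1}) = \bd\Sigma_y^{-1}$, the right-hand side collapses to $\bd\Sigma_y$, giving $\bd A\wtilde{\bd\Sigma}\bd A^\top = \bd\Sigma_y$.

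The mean identity then follows almost for free from the simplified form $\wtilde{\bd\mu} = \wtilde{\bd\Sigma}\bd A^\top\bd\Sigma_y^{-1}(\bd\mu_y - \bd A\bd\mu_\theta) + \bd\mu_\theta$. I would first confirm this equals the raw expression in (\ref{eq:muti}) by observing that their difference is $\wtilde{\bd\Sigma}\,\wtilde{\bd\Sigma}^{-1}\bd\mu_\theta = \bd\mu_\theta$, i.e. the operator acting on $\bd\mu_\theta$ in the grouped expression is exactly $\wtilde{\bd\Sigma}^{-1}$. Left-multiplying by $\bd A$ and substituting the already-established $\bd A\wtilde{\bd\Sigma}\bd A^\top = \bd\Sigma_y$ turns the first summand into $\bd\Sigma_y\bd\Sigma_y^{-1}(\bd\mu_y - \bd A\bd\mu_\theta) = \bd\mu_y - \bd A\bd\mu_\theta$, which cancels the $\bd A\bd\mu_\theta$ coming from the second summand and leaves $\bd\mu_y$.

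I expect the covariance step to be the main obstacle, since it requires two carefully oriented applications of the Woodbury/matrix-inversion lemma together with the discipline, enforced by $p > q$, of never inverting the rectangular $\bd A$ and keeping every inverse at the $q \times q$ level. Once $\bd A\wtilde{\bd\Sigma}\bd A^\top = \bd\Sigma_y$ is in hand, the mean identity is a one-line substitution, and the two identities together confirm that the pushforward is $N_q(\bd\mu_y, \bd\Sigma_y)$, as claimed.
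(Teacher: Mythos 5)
Your proposal is correct and follows essentially the same route as the paper's own proof: reduce to the two moment identities, establish $\bd A\wtilde{\bd\Sigma}\bd A^\top = \bd\Sigma_y$ via two applications of the Woodbury/matrix-inversion lemma (precision form, then the conjugated factored form, using $\bd S^{-1} + \bd D^{-1} = \bd\Sigma_y^{-1}$), and then obtain $\bd A\wtilde{\bd\mu} = \bd\mu_y$ by recognizing $\wtilde{\bd\Sigma}^{-1}$ acting on $\bd\mu_\theta$ in (\ref{eq:muti}) and substituting the covariance identity. Your phrasing of the second Woodbury step as the direct identity $\bd S - \bd S(\bd S + \bd D)^{-1}\bd S = (\bd S^{-1} + \bd D^{-1})^{-1}$ is a slightly more compact packaging of the same computation the paper carries out by expanding $(\bd D + \bd B)^{-1} = \bd B^{-1} - \bd B^{-1}\bd\Sigma_y\bd B^{-1}$ and multiplying through.
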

\begin{proof}
We need to show that $\bd A \wtilde{\bd\mu} = \bd\mu_y$ and $\bd A \wtilde{\bd\Sigma} \bd A^\top = \bd\Sigma_y$.
We will show the second claim about covariances and then use it to show the first claim about the means.

Starting from \ref{eq:sigti},
\[
\wtilde{\bd\Sigma}^{-1}  =  \bd\Sigma_{\theta}^{-1} + \bd A^T \big\{ \bd\Sigma_y^{-1} - (\bd A\bd\Sigma_\theta \bd A^T)^{-1} \big\} \bd A  \ .
\]
Let $\bd D=\bd\Sigma_\theta^{-1}$, $\bd U = \bd A^T$, $\bd E = \bd \Sigma_y^{-1} - (\bd A \bd\Sigma_\theta \bd A^T)^{-1}$, and $\bd V = \bd A$. 
The  Woodbury formula states $(\bd D + \bd U \bd E \bd V)^{-1} = \bd D^{-1} - \bd D^{-1} \bd U (\bd E^{-1} + \bd V\bd D^{-1}\bd U)^{-1}\bd V\bd D^{-1}$, so that
\begin{align*}
\wtilde{\bd\Sigma}  =  
\bd\Sigma_\theta - \bd\Sigma_{\theta} \bd A^T\left\{(\bd\Sigma_y^{-1} - (\bd A\bd\Sigma_\theta \bd A^T)^{-1})^{-1} + \bd A \bd \Sigma_\theta \bd A^T \right\} ^{-1} \bd A \bd\Sigma_\theta  \ .
\end{align*}
Now consider $\bd A \wtilde{\bd\Sigma} \bd A^T$, and let $\bd B = \bd A \bd\Sigma_\theta \bd A^T$. 
Then,
\[ \bd A \wtilde{\bd\Sigma}\bd A^T = \bd B - \bd B \left\{(\bd\Sigma_y^{-1} - \bd B^{-1}) ^{-1} + \bd B \right\}^{-1} \bd B  \ . \]
Now redefine $\bd D = (\bd\Sigma_y^{-1} - \bd B^{-1})^{-1}$ and $\bd E = \bd B$ (with $\bd U = \bd V = \bd I$), and again apply the Woodbury formula.  
Then $(\bd D + \bd E)^{-1} = \bd B^{-1} - \bd B^{-1}\bd\Sigma_y \bd B^{-1}$.
Thus,
\begin{align*}
    \bd A \wtilde{\bd\Sigma}\bd A^T &= \bd B - \bd B(\bd B^{-1} - \bd B^{-1} \bd\Sigma_y \bd B^{-1}) \bd B  \\
    %
    %
    &=\bd\Sigma_y  \ . 
\end{align*}

Next, to see the first claim about the mean vectors, start from \ref{eq:muti} and add and subtract the term $\bd A^T \bd\Sigma_y^{-1}\bd A\bd\mu_\theta$ within the parentheses:
\begin{align}
\wtilde{\bd\mu} &= \wtilde{\bd\Sigma} \left( \bd A^T \bd\Sigma_y^{-1}\bd\mu_y -\bd A^T \bd\Sigma_y^{-1}\bd A\bd\mu_\theta  +  
\big\{ \bd A^T \bd\Sigma_y^{-1}\bd A\bd  -  \bd A^T (\bd A \bd\Sigma_\theta \bd A^T)^{-1} \bd A \big\} \bd\mu_\theta  +  
\bd\Sigma^{-1}_\theta \bd\mu_\theta \right)  \nonumber \\
&= \wtilde{\bd\Sigma} \left( \bd A^T \bd\Sigma_y^{-1}\bd\mu_y -\bd A^T \bd\Sigma_y^{-1}\bd A\bd\mu_\theta + \big\{ \wtilde{\bd\Sigma}^{-1} - \bd\Sigma^{-1}_\theta \big\} \bd\mu_\theta  + 
\bd\Sigma^{-1}_\theta \bd\mu_\theta \right)  \nonumber \\
&= \wtilde{\bd\Sigma} \left( \bd A^T \bd \Sigma_y^{-1}\bd\mu_y -\bd A^T \bd\Sigma_y^{-1}\bd A\bd\mu_\theta \right) + \bd\mu_\theta  \nonumber \\
&= \wtilde{\bd\Sigma}  \bd A^T \bd\Sigma_y^{-1}(\bd \mu_y -\bd A\bd\mu_\theta) + \bd\mu_\theta  \nonumber  \ .
\end{align}
Now consider $\bd A\wtilde{\bd\mu}$ and use the fact above that $\bd A \wtilde{\bd\Sigma} \bd A^T = \bd\Sigma_y$ so that 
\begin{align*}
\bd A\wtilde{\bd\mu}  &=  \big( \bd A \wtilde{\bd\Sigma} \bd A^T \big) \bd\Sigma_y^{-1}(\bd\mu_y -\bd A \bd\mu_\theta) + \bd A\bd\mu_\theta  \\
 &= \bd\mu_y  \ .
\end{align*}
\end{proof}

\subsection{BJW Solution: Gaussian Mean Estimation as ``Stochastic Map'' Inversion}  \label{sec:App2}

The forward map is $\bd g(\bd\theta) \defi \bd A \bd\theta$ where $\bd A \defi [\bd 1_n \vdots \bd I_n]$ and 
\\ $\bd\theta_n \defi (\mu, \epsilon_1, \ldots, \epsilon_n)^\top$.
The observable distribution is $N_n(\bd\mu_y, \bd\Sigma_y)$ with $\bd\Sigma_y = \sigma^2_y \bd I_n$.
The initial distribution is $\ut f_{\bd\Theta}  \equiv  N_{n+1}(\bd\mu_\theta, \bd\Sigma_\theta)$ with $\bd\mu_\theta^\top  =  (\mu_0, \bd 0_n^\top)$ and 
\\ $\bd\Sigma_\theta  =  \text{diag}(\sigma_0^2, \sigma^2_\epsilon, \ldots, \sigma^2_\epsilon)$.
The BJW solution in this case is $N_{n+1}(\wtilde{\bd\mu}, \wtilde{\bd\Sigma})$ with mean and covariance given by \ref{eq:muti}, \ref{eq:sigti}, or equivalently,
\begin{align*}
   \wtilde{\bd\Sigma}^{-1}
   \wtilde{\bd\mu}  &=  \bd A^\top \bd\Sigma_y^{-1} \bd\mu_y  
               -  \bd A^\top (\bd A \bd\Sigma_\theta \bd A^\top)^{-1} \bd A \bd\mu_\theta  
               +  \bd\Sigma_\theta^{-1} \bd\mu_\theta   \\
   \wtilde{\bd\Sigma}^{-1}  &=  \bd A^\top \bd\Sigma_y^{-1} \bd A  
                  -  \bd A^\top (\bd A \bd\Sigma_\theta \bd A^\top)^{-1} \bd A  
                  +  \bd\Sigma_\theta^{-1}  \ . 
\end{align*}
For any $\sigma^2$ term, we will take $\tau = 1 / \sigma^2$ to be the corresponding precision.  Also define $\bd J_n = \bd 1_n \bd 1_n^\top$.
The most complicated term in the equations above is
\begin{align*}
   \bd A^\top (\bd A \bd\Sigma_\theta \bd A^\top)^{-1} \bd A  
   &=  
   \begin{bmatrix}  n c_2  &  c_2 \bd 1_n^\top  \\  c_2 \bd 1_n  &  \tau_\epsilon \bd I_n - c_1 \bd J_n  \end{bmatrix}  \\
   c_1  &\defi  \frac{\tau^2_\epsilon}{n\tau_\epsilon + \tau_0}  
   \quad\quad
   c_2  \defi  \tau_\epsilon - n c_1  =  \frac{\tau_\epsilon \tau_0 }{n\tau_\epsilon + \tau_0}  \ .
\end{align*}
Now the precision matrix is
\begin{align*}
   \wtilde{\bd\Sigma}^{-1}  
   &=
   \begin{bmatrix}  n\tau_y  &  \tau_y \bd 1_n^\top  \\  \tau_y \bd 1_n  &  \tau_y \bd I_n \end{bmatrix}  -
   \begin{bmatrix}  n c_2  &  c_2 \bd 1_n^\top  \\  c_2 \bd 1_n  &  \tau_\epsilon \bd I_n - c_1 \bd J_n  \end{bmatrix}  +
      \begin{bmatrix}  \tau_0  &  \bd 0_n^\top  \\  \bd 0_n  &  \tau_\epsilon \bd I_n  \end{bmatrix}  \\
   &=
   \begin{bmatrix}  n c_3 + \tau_0  &  c_3 \bd 1_n^\top  \\  c_3 \bd 1_n  &  \tau_y \bd I_n + c_1 \bd J_n  \end{bmatrix}  
   \quad\quad 
   c_3  \defi  \tau_y - c_2  \ .
\end{align*}
The covariance is 
\begin{align*}
   \wtilde{\bd\Sigma}  
   &=  
   \begin{bmatrix}  a  &  \bd b^\top  \\  \bd b  &  \bd C  \end{bmatrix}^{-1}
   =  
   \begin{bmatrix}  \frac{1}{a} + \frac{1}{a^2} \bd b^\top (2,2) \bd b  &  -\frac{1}{a} \bd b^\top (2,2)  \\  -\frac{1}{a} (2,2) \bd b  &  (2,2)  \end{bmatrix}  \\
   (2,2)^{-1}  &=  \bd C - \frac{1}{a} \bd b \bd b^\top  
   =  \tau_y \bd I_n  +  c_4 \bd J_n  \\
   (2,2)  &=  \sigma^2_y \bd I_n - c_5 \bd J_n  \\
   c_4  &\defi  \frac{\tau_\epsilon^2}{n \tau_\epsilon + \tau_0}  -  \frac{c_3^2}{n c_3 + \tau_0}  
   \quad\quad 
   c_5  \defi  \frac{\sigma^4_y c_4}{n \sigma^2_y c_4 + 1}  \ .  
\end{align*}
and hence
\begin{align}
   \wtilde{\bd\Sigma}  
   &=  
   \begin{bmatrix}  c_7  &  c_6\bd 1_n^\top  \\  c_6 \bd 1_n  &  \sigma^2_y \bd I_n - c_5 \bd J_n  \end{bmatrix}  
   \quad =  \begin{bmatrix}  \mathcal O(n^{-1})  &  -\mathcal O(n^{-1}) \bd 1_n^\top  \\  -\mathcal O(n^{-1}) \bd 1_n  &  \sigma^2_y \bd I_n - \mathcal O(n^{-1}) \bd J_n  \end{bmatrix}  \\
   c_6  &\defi  - \frac{ c_3(\sigma^2_y - n c_5) }{n c_3 + \tau_0}  
   \quad\quad 
   c_7  \defi  \frac{1}{n c_3 + \tau_0}  +  \frac{n c_3^2 (\sigma^2_y - n c_5)}{(n c_3 + \tau_0)^2}  \ .  \nonumber
\end{align}
%

Returning to the mean, if $\overline\mu_y  \defi  n^{-1} \sum_{i = 1}^n \mu_i$, then
\begin{align*}
   \wtilde{\bd\Sigma}^{-1}
   \wtilde{\bd\mu}  
   &=
   \begin{bmatrix}  \tau_y \bd 1_n^\top \bd\mu_y - n c_2\mu_0 + \tau_0 \mu_0  \\  \tau_y \bd\mu_y - c_2 \mu_0 \bd 1_n  \end{bmatrix}  \\
   %
   \wtilde{\bd\mu}  
   &=  
   \begin{bmatrix}  c_7  &  c_6\bd 1_n^\top  \\  c_6 \bd 1_n  &  \sigma^2_y \bd I_n - c_5 \bd 1_n \bd 1_n^\top  \end{bmatrix} 
   \begin{bmatrix}  n \tau_y \overline\mu_y + (\tau_0 - n c_2) \mu_0  \\  \tau_y \bd\mu_y - c_2 \mu_0 \bd 1_n  \end{bmatrix}  \\
   &=  
   \begin{bmatrix}  
   (n c_6 + n c_7) \tau_y \overline\mu_y  +  \left( -n c_2 c_6 + c_7 (\tau_0 - n c_2) \right)\mu_0  \\  
   \bd\mu_y  +  (n c_6 - n c_5) \tau_y \overline\mu_y \bd 1_n  +  \left( -c_2 \sigma^2_y + n c_2 c_5 + c_6 (\tau_0 - n c_2)  \right) \mu_0 \bd 1_n
   \end{bmatrix}  \\
   &=  
   \begin{bmatrix}  \left( \sigma^2_y - \mathcal O(n^{-1}) \right) \tau_y \overline\mu_y  +  \mathcal O(n^{-1}) \mu_0  \\  
   \bd\mu_y  +  \left( -\sigma^2_y + \mathcal O(n^{-1}) \right) \tau_y \overline\mu_y \bd 1_n  -  \mathcal O(n^{-1}) \mu_0 \bd 1_n \end{bmatrix}  \ ,  
 \end{align*}
 and hence
 \begin{align}
   \wtilde{\bd\mu}  
   &=
   \begin{bmatrix}  \overline\mu_y + \mathcal O(n^{-1})  \\  (\bd\mu_y - \overline\mu_y \bd 1_n) +  \mathcal O(n^{-1}) \bd 1_n  \end{bmatrix}  \ . 
\end{align}
%



\scriptsize

\bibliographystyle{asa}
\bibliography{Change_of_Vars.bib}


\end{document}